\pdfoutput=1
\documentclass[10pt,pra,aps,superscriptaddress,nofootinbib,twocolumn,longbibliography]{revtex4-1}
\usepackage{amsmath,bbm}
\usepackage{amsthm}
\usepackage{amsmath}
\usepackage{latexsym}
\usepackage{amssymb}
\usepackage{float}
\usepackage{graphicx}           
\usepackage{color}
\usepackage{xcolor}
\usepackage{mathpazo}
\usepackage{comment}
\usepackage{enumitem}
\usepackage{multirow}
\usepackage{setspace}
\usepackage{subcaption}
\usepackage[colorlinks=true,linkcolor=blue,citecolor=magenta,urlcolor=blue]{hyperref}
\usepackage{svg}

\newcommand{\be}{\begin{equation}}
\newcommand{\ee}{\end{equation}}
\newcommand{\bea}{\begin{eqnarray}}
\newcommand{\eea}{\end{eqnarray}}
\def\squareforqed{\hbox{\rlap{$\sqcap$}$\sqcup$}}
\def\qed{\ifmmode\squareforqed\else{\unskip\nobreak\hfil
\penalty50\hskip1em\null\nobreak\hfil\squareforqed
\parfillskip=0pt\finalhyphendemerits=0\endgraf}\fi}
\def\endenv{\ifmmode\;\else{\unskip\nobreak\hfil
\penalty50\hskip1em\null\nobreak\hfil\;
\parfillskip=0pt\finalhyphendemerits=0\endgraf}\fi}

\newcommand{\tr}{\text{Tr}}
\newcommand{\I}{\mathbbm{I}}

\newcommand{\ket}[1]{|#1\rangle}
\newcommand{\bra}[1]{\langle#1|}

\makeatletter
\newtheorem*{rep@theorem}{\rep@title}
\newcommand{\newreptheorem}[2]{%
\newenvironment{rep#1}[1]{%
 \def\rep@title{#2 \ref{##1}}%
 \begin{rep@theorem}}%
 {\end{rep@theorem}}}
\makeatother

\newtheorem{thm}{Theorem}
\newreptheorem{thm}{Theorem}
\newtheorem{lemma}{Lemma}

\begin{document}

\title{Single shot distinguishability of noisy quantum channels}

\author{Satyaki Manna}
\email{mannasatyaki@gmail.com}
\affiliation{Department of Physics, School of Basic Sciences, Indian Institute of Technology Bhubaneswar, Odisha 752050, India}

\begin{abstract}
    Among the intriguing features of quantum theory, the problem of distinguishing quantum channels is of fundamental interest. In this paper, we focus on the single-shot discrimination of two noisy quantum channels using two distinct classes of probes: single-system (product) probes and entangled probes. Our aim is to identify optimal probing state for specific discrimination tasks and to analyze the necessity and role of entanglement in enhancing channel distinguishability.
 We show that maximally entangled probes are optimal for discriminating two qubit depolarizing channels, with any nonzero entanglement providing an advantage over single-system probes. In contrast, for dephasing channels in arbitrary dimensions, we prove that single-system probe can be optimal and that entanglement offers no improvement, even when the dephasing unitary is generalized.
For qubit amplitude-damping channels, we identify distinct noise-dependent regimes in which either single-system probe outperforms maximally entangled probes and vice-versa. Moreover, we demonstrate that non-maximally entangled probes can act as the optimum probe if the noise parameters restricted to certain values in this task. We also present examples of noisy unitary channels for which discrimination is possible using non-maximally entangled probe, while both single-systems and maximally entangled probes fail. We introduce another class of noisy unitary channels for which perfect discrimination is achievable with a single system, while maximally entangled probes are insufficient. Finally, we show that two erasure channels can be optimally discriminated using any pure single-system probe, with no advantage gained from entanglement.
\end{abstract}

\maketitle


\section{Introduction}
Distinguishability of different physical processes has long been an important line of investigation for understanding the subtlety of quantum theory and, more broadly, the physical world. In classical theory, the notion of distinguishability is straightforward. In contrast, quantum theory—characterized by several counterintuitive and nonclassical features—renders this concept significantly more peculiar. In general, distinguishability refers to the task of identifying which specific process has occurred from a known set of possible processes. From a foundational perspective, this notion is particularly useful for commenting on the ontological status and reality of quantum states\cite{leifer2014,barrett14,chaturvedi2021,Chaturvedi2020,Pusey_2012,ray2025,ray2024}. Beyond its foundational implications, distinguishability has numerous applications in quantum information processing, quantum cryptography\cite{Bae_2015,JánosABergou_2007}, and quantum advantageous privacy-preserving communication tasks\cite{Manna,pandit}.

Since the inception of this kind of research, the distinguishability of quantum states has been the most extensively studied topic\cite{Hellstrom, benett, vedral, Halder,JánosABergou_2007,Bae_2015}. More recently, the distinguishability of quantum channels has emerged as a rich and comprehensive field of study \cite{watrous05,npj,njp2021,acin2001,duan2009,harrow,manna3,manna4,manna2025,feng,Manna_2}, albeit with a considerably more complex structure compared to state discrimination. Nevertheless, most existing studies refrain from commenting on the optimal choice of probes for a given set of channels, except in the case of simple channels\cite{GMauroD’Ariano_2002,bsxv-q9x7,feng}. Another important research direction began with the seminal work of Kitaev \cite{AYuKitaev_1997}, who demonstrated that an entanglement-assisted probe–ancilla strategy can sometimes provide an advantage in discriminating quantum channels. Over the years, the advantage of entangled probes over product probes in distinguishing noiseless quantum channels has been extensively explored \cite{bsxv-q9x7,sacchi,sacchi2}. However, analogous investigations for noisy quantum channels remain largely unexplored. Another relevant hierarchy of probes is given by maximally and non-maximally entangled states; yet this distinction of amount of entanglement in discrimination of channels has not been investigated in sufficient depth.

Noisy quantum channels play a central role in quantum information theory, as they provide realistic models of physical processes in which interactions with the environment and decoherence are unavoidable. In quantum communication, such channels are employed to evaluate channel capacities \cite{schumacher,lloyd,schumacher2} and transmission fidelities\cite{oh}, while in quantum cryptography they are crucial for analyzing eavesdropping strategies and establishing security bounds under realistic conditions\cite{gisin,Bilal,wehner}. Noisy quantum channels have also found extensive applications in quantum metrology, particularly in the study of noise-assisted quantum advantages\cite{Gorecki,falaye2017,Rossi}. In addition, they play a vital role in quantum process tomography\cite{Kunjummen,Bouchard2019}, randomized benchmarking\cite{helsen}, and gate-set tomography\cite{vinas}, where they are used to characterize and validate the performance of quantum devices.

In this work, our primary objective is to identify the optimal probe for single-shot discrimination of noisy quantum channels by examining noisy channel distinguishability under restricted probe resources. In this context, we also comment on the necessity of an auxiliary system for this task. Along this, the hierarchy of maximally and non-maximally entangled probe is also scrutinized. To commence, we briefly review the definition of quantum channels and the general protocol for discriminating a known set of channels. We then describe two discrimination scenarios: one employing a single-system probe and the other using an entangled probe. At this stage, it is worth noting that the use of a product-system probe is equivalent to deploying a single-system probe. Subsequently, we proceed to our results.
Our analysis focuses on the distinguishability of five types of noisy channels, namely the depolarizing channel, dephasing channel, amplitude damping channel, noisy unitary channel, and erasure channel. We find that for the discrimination of two qubit depolarizing channels, maximally entangled probes are optimal, with all maximally entangled states performing equivalently. In addition to that, we show that even an infinitesimal amount of entanglement assistance provides an advantage over single-system probes.
In contrast, the case of dephasing channels yields a counterintuitive result. We prove that a single-system probe is sufficient for the discrimination of two $d$-dimensional dephasing channels, and any amount of entanglement does not enhance distinguishability. This conclusion remains unchanged when the dephasing unitary is replaced by an arbitrary unitary, indicating that the optimal probe structure is preserved under such generalization.
For two amplitude damping channels acting on $\mathbb{C}^2$, we identify two distinct regimes depending on the noise parameters of both the channels. We demonstrate that there exist channel pairs for which a single-system probe outperforms a maximally entangled probe, as well as pairs for which the opposite holds. Remarkably, we also observe that non-maximally entangled probes are the optimum probes in this task if  noise parameters of two channels satisfy a certain relation.
We then investigate noisy unitary channels and present two distinct sets of mixed unitary channels. For the first set, perfect discrimination is possible using a non-maximally entangled probe, while both single-system and maximally entangled probes fail. For the second set, discrimination is achievable using single system but not with a maximally entangled probe. Finally, we show that two erasure channels can be optimally discriminated using any single-system probe, and that entanglement does not offer any extra advantage. Remarkably, the distinguishability of two erasure channels is independent of the probing state, single or entangled.

The paper is structured as follows. The next section begins with a review of quantum channel discrimination, formulating two scenarios based on the choice of probes. In the subsequent section, we present our main results. Finally, we summarize our findings and outline possible directions for future research.
\section{Discrimination of quantum channels}
A quantum channel transforms an arbitrary density matrix to another density matrix. Mathematically, a quantum channel is described as a completely positive trace preserving (CPTP) maps $\mathcal{N}:\mathcal{L}(\mathcal{H}_A)\rightarrow\mathcal{L}(\mathcal{H}_B)$. Any map of this kind can be written as follows:
\be
\mathcal{N}(\rho)=\sum_{i=0}^{d-1} V_i\rho V_i^\dagger,
\ee
where $\rho\in\mathcal{L}(\mathcal{H}_A)$ and $V_i\in\mathcal{L}(\mathcal{H}_A,\mathcal{H}_B)$, $\forall i\in\{0,\cdots,d-1\}$. The operators $V_i$ are known as Krauss operators. The trace preserving property supplements the relation $\sum_i V_i^\dagger V_i=\I$. In this paper, we are going to deal with noisy channels. Noisy channels are non-unitary CPTP maps, that means they need more than one krauss operators to define them.

We consider a known set of $q$ channels $\{\mathcal{N}_{i|x}\}_x$ acting on a $d$-dimensional quantum state and $x\in\{1,\cdots,q\}$, where $\sum_i \mathcal{N}_{i|x}^\dagger \mathcal{N}_{i|x}=\I$, for each $x$. We assume that all the channels are sampled from a probability distribution $\{p_x\}_x$, i.e., $p_x>0$ and $\sum_x p_x=1$. To distinguish the channels, the unitary device is fed with a known quantum state, product or entangled and the device carries out one of $n$ transformations due to the corresponding quantum channels. After this process, the device gives a evolved state as the output. Therefore, one can perform any measurement on the evolved state and this measurement can be optimized such that the distinguishability of these evolved states will be maximum. Let us describe this optimal measurement by a set of POVM elements $\{M_{b}\}_b$, where $b\in\{1,\cdots,q\}$ and $\sum_b M_{b}=\I$. The protocol is successful in distinguishing the channels if $b$ is same as $x$. Any classical post processing of outcome $b$ can be included in the measurement $\{M_{b}\}_b$. Depending on the initial probing state, we can formulate two situations which are described in details in the next subsection.
\subsection{With Single System}
At first, the channel device is fed a single system $\rho$. After the device applying any of the channels from the set $\{\mathcal{N}_{i|x}\}_x$ , the output state will be $\mathcal{N}_{i|x}(\rho)=\sum_i \mathcal{N}_{i|x}\rho\mathcal{N}_{i|x}^\dagger$. After performing the measurement $\{M_b\}_b$, the distinguishability of the set of channels becomes the distinguishability of the set of evolved states. So distinguishability in this scenario, denoted by $\mathcal{D}_S$, is defined as,
\bea\label{D_S}
&&\mathcal{D}_S\left[\{\mathcal{N}_{i|x}\}_x,\{p_x\}_x\right]\nonumber\\
&=& \max_{\rho} \sum_x p_xp(b=x|x)\nonumber\\
&=& \max_{{\rho},\{N_b\}}\sum_x p_x\tr\left((\sum_i \mathcal{N}_{i|x}\rho\mathcal{N}_{i|x}^\dagger)M_{b=x}\right)\nonumber\\
&=& \max_{\rho} \mathcal{DS}\left[\left\{\sum_i \mathcal{N}_{i|x}\rho\mathcal{N}_{i|x}^\dagger\right\}_x,\{p_x\}_x\right].\nonumber
\eea
$\mathcal{DS}[\{\sigma_k\}_k,\{q_k\}_k]$ denotes the distinguishability of the set of states $\{\sigma_k\}_k$, sampled from the probability distribution $\{q_k\}_k$. This expression reduced as following by Hellstrom's formula \cite{Hellstrom} when we want to know the distinguishability of two channels $\mathcal{N}_{i|1}$ and $\mathcal{N}_{i|2}$:
\bea
&&\mathcal{D}_S\left[\{\mathcal{N}_{i|x}\}_{x=1}^2,\{p_x\}_{x=1}^2\right]\nonumber\\
&=& \max_{\rho} \mathcal{DS}\left[\left\{\sum_i \mathcal{N}_{i|x}\rho\mathcal{N}_{i|x}^\dagger\right\}_x,\{p_x\}_x\right].\nonumber\\
&=& \max_{\rho}\frac12\left(1+||p_1 \sum_i \mathcal{N}_{i|1}\rho\mathcal{N}_{i|1}^\dagger-p_2 \sum_i \mathcal{N}_{i|2}\rho\mathcal{N}_{i|2}^\dagger||\right).\nonumber\\
\eea
If the channels are sampled from equal probability distribution, i.e., $p_1=p_2=1/2$, the above expression reduces to, 
\bea\label{D_S1}
&&\mathcal{D}_S\left[\{\mathcal{N}_{i|x}\}_{x=1}^2,\{1/2,1/2\}\right]\nonumber\\
&=&\frac12+\frac14\max_{\rho}|| \sum_i \mathcal{N}_{i|1}\rho\mathcal{N}_{i|1}^\dagger- \sum_i \mathcal{N}_{i|2}\rho\mathcal{N}_{i|2}^\dagger||,
\eea
where $||.||$ (trace norm) denotes the sum of the absolute values of the eigenvalues. By convexity of trace norm, the maximum value of $|| \sum_i \mathcal{N}_{i|1}\rho\mathcal{N}_{i|1}^\dagger- \sum_i \mathcal{N}_{i|2}\rho\mathcal{N}_{i|2}^\dagger||$ is achieved when $\rho$ is the pure input state.
\subsection{With Entangled System}
Now, the unitary device is fed with a $d\otimes d'$ entangled state $\rho_{AB}$ and the device applies one of the channels from the set $\{\mathcal{N}_{i|x}\}_x$ on the part $A$ of the entangled state. The evolved state will be $\mathcal{N}_{i|x}(\rho_{AB})=\sum_i (\mathcal{N}_{i|x}\otimes\I)\rho_{AB}(\mathcal{N}_{i|x}^\dagger\otimes\I)$. Then one can perform the optimal measurement of dimension $dd'$. Similarly, the distinguishability of the channels in this scenario, denoted as $\mathcal{D}_E$, can be written as,
\bea\label{D_E}
&&\mathcal{D}_E\left[\{\mathcal{N}_{i|x}\}_x,\{p_x\}_x\right]\nonumber\\
&=& \max_{\rho_{AB}} \sum_x p_xp(b=x|x)\nonumber\\
&=& \max_{\rho_{AB},\{N_b\}} \sum_x p_x \tr\left[(\sum_i \mathcal{N}_{i|x}\otimes\I)\rho_{AB}(\sum_i \mathcal{N}_{i|x}^\dagger\otimes\I)M_{b=x}\right]\nonumber\\
&=& \max_{\rho_{AB}} \mathcal{DS}\left[\left\{(\sum_i \mathcal{N}_{i|x}\otimes\I)\rho_{AB}(\sum_i \mathcal{N}_{i|x}^\dagger\otimes\I)\right\}_x,\{p_x\}_x\right].\nonumber\\
\eea
From the consequence of convexity of trace norm and Schmidt decomposition, it can be proved that it is sufficient to take $d=d'$\cite{2016}.

Similarly, for two equiprobable channels, the distinguishability reads,
\bea
&&\mathcal{D}_E\left[\{\mathcal{N}_{i|x}\}_{x=1}^2,\{1/2,1/2\}\right]\nonumber\\
&=& \frac12+\frac14\max_{\rho_{AB}}|| \sum_i (\mathcal{N}_{i|1}\otimes\I)\rho_{AB}(\mathcal{N}_{i|1}^\dagger\otimes\I)-\nonumber\\
&&\sum_i (\mathcal{N}_{i|2}\otimes\I)\rho_{AB}(\mathcal{N}_{i|2}^\dagger\otimes\I)||.
\eea
Suppose the input state is a product state, that means $\rho_{AB}=\rho_A\otimes\rho_B$. Then the above equation can be reduced as,
\bea
&&\mathcal{D}_P\left[\{\mathcal{N}_{i|x}\}_{x=1}^2,\{1/2,1/2\}\right]\nonumber\\
&=& \frac12+\frac14\max_{\rho_{AB}}|| \Bigg(\sum_i \mathcal{N}_{i|1}\rho_{A}\mathcal{N}_{i|1}^\dagger
-\sum_i \mathcal{N}_{i|2}\rho_{A}\mathcal{N}_{i|2}^\dagger\Bigg)\otimes\rho_B||\nonumber\\
&=& \frac12+\frac14\max_{\rho_{AB}}|| \Bigg(\sum_i \mathcal{N}_{i|1}\rho_{A}\mathcal{N}_{i|1}^\dagger-\sum_i \mathcal{N}_{i|2}\rho_{A}\mathcal{N}_{i|2}^\dagger\Bigg)||.||\rho_B||\nonumber\\
&=& \frac12+\frac14\max_{\rho_{AB}}|| \Bigg(\sum_i \mathcal{N}_{i|1}\rho_{A}\mathcal{N}_{i|1}^\dagger-\sum_i \mathcal{N}_{i|2}\rho_{A}\mathcal{N}_{i|2}^\dagger\Bigg)||,
\eea
which is same as \eqref{D_S1}. That indicates product probes act like single system probe in this discrimination task. It is needless to say that this inference is true for any number of channels sampled from any probability distribution because subsystem B does not change for any operation. The optimum measurement can not generate any extra information if it measures the subsystem B. Only the subsystem A is responsible for the discrimination.
\section{Results}
In this section, we exhibit all our results in different subsections for each different channels. Throughout the paper,
we denote identity operator in dimension $d$ as $\I_d$.
\subsection{Depolarizing Channels}
The map of a depolarizing channel $(\Phi)$ can be written as,
\be
\Phi_q(\rho)=q\rho+(1-q)\frac{\I_d}{d};
\ee
where $\rho\in\mathbbm{C}^d$ and $q\in(0,1)$. One can interpret this channel such that it assumes we forget the input bit with some probability and it replaces it with maximally mixed state. In this subsection, we are interested in the distinguishability of two depolarizing channels in two scenarios, one with single system probe and the other with entangled system probe. At first, we analyze the distinguishability using single system. 
\begin{lemma}\label{th1}
    All pure single systems are equivalent probes for distinguishing two depolarizing channels.
\end{lemma}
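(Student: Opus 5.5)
The idea is to compute the difference of the two output states explicitly and observe that its trace norm depends on the probe only through its spectrum. Take two depolarizing channels $\Phi_{q_1}$ and $\Phi_{q_2}$ on $\mathbb{C}^d$ with equal priors, and a single-system probe $\rho$. By linearity,
\be
\Phi_{q_1}(\rho)-\Phi_{q_2}(\rho)=(q_1-q_2)\left(\rho-\frac{\I_d}{d}\right).
\ee
The identity part cancels except through the $\rho$-dependent term. By \eqref{D_S1}, the quantity to maximize is therefore $|q_1-q_2|\,\|\rho-\I_d/d\|$.

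The first key step is to diagonalize $\rho=\sum_k\lambda_k\ket{k}\bra{k}$. Since $\I_d/d$ commutes with every operator, $\rho-\I_d/d$ is diagonal in the same basis with eigenvalues $\lambda_k-1/d$. Hence
\be
\left\|\rho-\frac{\I_d}{d}\right\|=\sum_k\left|\lambda_k-\frac1d\right|.
\ee
This depends only on the spectrum of $\rho$ and not on its eigenbasis; equivalently, the depolarizing channel is unitarily covariant, $\Phi_q(U\rho U^\dagger)=U\Phi_q(\rho)U^\dagger$, and the trace norm is unitarily invariant.

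The second step is to specialize to pure probes. Every pure state has spectrum $(1,0,\dots,0)$, so
\be
\left\|\psi-\frac{\I_d}{d}\right\|=\left(1-\frac1d\right)+(d-1)\frac1d=\frac{2(d-1)}{d}
\ee
for every pure $\psi$. Consequently all pure probes give the same success probability,
\be
\frac12+\frac{(d-1)|q_1-q_2|}{2d},
\ee
which proves the lemma.

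To show that this common value is also optimal over single-system probes, I would invoke the remark after \eqref{D_S1} that convexity of the trace norm places the maximum at pure states. Alternatively, $\sum_k|\lambda_k-1/d|$ is a convex function on the probability simplex, so it is maximized at a vertex.

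I expect no real obstacle. The only care needed is to note that the cancellation of the identity term and the commutation of $\I_d/d$ with $\rho$ are what allow the norm to be read off from the spectrum.
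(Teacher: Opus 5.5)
Your proposal is correct and takes essentially the same route as the paper. Both reduce the problem to $|q_1-q_2|\,\|\rho-\I_d/d\|$ and show that a pure probe gives eigenvalues $1-\frac1d$ (once) and $-\frac1d$ ($d-1$ times), hence the probe-independent value $\frac12\bigl(1+|q_1-q_2|(1-\frac1d)\bigr)$; the paper checks $\ket{a}$ and its orthogonal complement directly, while you work in the eigenbasis of $\rho$ using that $\I_d/d$ commutes with it, which also gives the spectral formula for mixed probes.
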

\begin{proof}
    As \eqref{D_S1} suggests, we need to calculate the quantity
$||\Phi_{q_1}(\rho)-\Phi_{q_2}(\rho)||$, where the channels are given by
$\Phi_{q_1}(\rho)=q_1\rho+(1-q_1)\I_d/d$ and
$\Phi_{q_2}(\rho)=q_2\rho+(1-q_2)\I_d/d$.
Substituting the explicit forms of the channels into the trace norm, we obtain
\bea\label{phi}
||\Phi_{q_1}(\rho)-\Phi_{q_2}(\rho)|| &=& |q_1-q_2|.||\rho-\I_d/d||.
\eea

Since $\rho$ is a pure state, we may write $\rho=\ket{a}\bra{a}$.
Expanding $\ket{a}=\sum_i\alpha_i\ket{i}$, we have
\be
\rho-\I_d/d
=
\sum_{i,j}\left(\alpha_i\alpha_j^*\ket{i}\bra{j}
-\frac1d\ket{i}\bra{i}\right)
=\mathbf{A}.
\ee
Let us first show that $\ket{a}$ is an eigenvector of $\mathbf{A}$.
Indeed,
\bea
\mathbf{A}\ket{a}
&=&\mathbf{A}\sum_k\alpha_k\ket{k}\nonumber\\
&=&
\sum_{i,j,k}\alpha_i\alpha_j^*\alpha_k
\ket{i}\langle j|k\rangle
-\frac{1}{d}\sum_{i,k}\ket{i}\langle i|k\rangle\nonumber\\
&=&
\sum_{i,j}\alpha_i|\alpha_j|^2\ket{i}
-\frac1d\sum_i\alpha_i\ket{i}\nonumber\\
&=&
\left(1-\frac1d\right)\ket{a},
\eea
where we used the normalization $\sum_j|\alpha_j|^2=1$.
Thus, $\ket{a}$ is an eigenvector of $\mathbf{A}$ with eigenvalue
$\lambda_1=1-\frac1d$.

Next, consider any other eigenvector $\ket{\Tilde{a}}$ orthogonal to
$\ket{a}$. Then,
\bea
\mathbf{A}\ket{\Tilde{a}}
&=&
\ket{a}\langle a|\Tilde{a}\rangle
-\frac1d\ket{\Tilde{a}}\nonumber\\
&=&
-\frac1d\ket{\Tilde{a}},
\eea
which shows that $\ket{\Tilde{a}}$ is an eigenvector with eigenvalue
$\lambda_2=\cdots=\lambda_d=-\frac1d$.

Consequently, the trace norm evaluates to
\bea\label{rho_I_d}
||\rho-\I_d/d||
&=&
1-\frac1d+(d-1)\left|-\frac1d\right|\nonumber\\
&=&
2\left(1-\frac1d\right).
\eea
Using \eqref{D_S1} together with \eqref{phi}, the distinguishability of the
two depolarizing channels is therefore given by
\bea\label{const}
&&
\mathcal{D}_S\left[\{\Phi_{q_1},\Phi_{q_2}\},\{1/2,1/2\}\right]\nonumber\\
&=&
\frac12+\frac14|q_1-q_2|
2\left(1-\frac1d\right)\nonumber\\
&=&
\frac12\left(1+|q_1-q_2|\left(1-\frac1d\right)\right).
\eea
We observe that the above expression is independent of the input state
$\rho$, which completes the proof. In particular, the distinguishability
takes a constant value. 
\end{proof}
 Now we will check the necessity of auxiliary system in distinguishability of two depolarizing channels. We already know that product states do not change the scenario. Consequently, no separable state acts better than the single system probe in distinguishing two depolarizing channels for the convexity of trace norm. Now we will move to the analysis with entangled probe and eventually comment on the optimum probe for distinguishing these channels.
 \begin{lemma}
All maximally entangled states are equivalent for distinguishing two depolarizing channels. 
 \end{lemma}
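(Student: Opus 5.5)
The plan is to mirror the proof of Lemma~\ref{th1}, now on the extended space. By the remark after \eqref{D_E} it suffices to take the ancilla of dimension $d$, so the probe is $\rho_{AB}$ on $\mathbb{C}^d\otimes\mathbb{C}^d$. Using linearity of the channel, we have
\be
(\Phi_q\otimes\I)(\rho_{AB})=q\,\rho_{AB}+(1-q)\frac{\I_d}{d}\otimes\rho_B ,
\ee
where $\rho_B=\tr_A\rho_{AB}$. Hence
\be
||(\Phi_{q_1}\otimes\I)(\rho_{AB})-(\Phi_{q_2}\otimes\I)(\rho_{AB})||=|q_1-q_2|\,\Big|\Big|\rho_{AB}-\frac{\I_d}{d}\otimes\rho_B\Big|\Big| .
\ee
The problem therefore reduces to computing $||\rho_{AB}-\I_d/d\otimes\rho_B||$ when $\rho_{AB}$ is maximally entangled, and checking that the value does not depend on which maximally entangled state is chosen.

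Take an arbitrary maximally entangled state $\ket{\psi}=\frac{1}{\sqrt d}\sum_i\ket{e_i}\ket{f_i}$, with $\{\ket{e_i}\}$ and $\{\ket{f_i}\}$ orthonormal bases. Its reduced state is $\rho_B=\I_d/d$, so the operator to analyze is
\be
\mathbf{B}=\ket{\psi}\bra{\psi}-\frac{\I_{d^2}}{d^2}.
\ee
This has exactly the structure treated in Lemma~\ref{th1}, but in dimension $d^2$. Repeating that eigenvector argument, $\ket{\psi}$ is an eigenvector with eigenvalue $1-1/d^2$, and every vector orthogonal to $\ket{\psi}$ is an eigenvector with eigenvalue $-1/d^2$. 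Equation \eqref{rho_I_d}, with $d$ replaced by $d^2$, then gives
\be
||\mathbf{B}||=2\left(1-\frac{1}{d^2}\right),
\ee
and consequently
\be
\mathcal{D}_E=\frac12\left(1+|q_1-q_2|\left(1-\frac{1}{d^2}\right)\right)
\ee
for every maximally entangled probe. This value is independent of the bases $\{\ket{e_i}\}$ and $\{\ket{f_i}\}$, which proves the claim.

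An alternative route gives the same invariance and serves as a cross-check. Every maximally entangled state has the form $(\I\otimes U)\ket{\Phi^+}$ for some unitary $U$. Since $\I\otimes U$ commutes with $\Phi_q\otimes\I$, and the trace norm is unitarily invariant, the quantity $||\cdot||$ is the same for all maximally entangled probes.

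The only step needing care is the identity $\rho_B=\I_d/d$. It is precisely this that turns $\I_d/d\otimes\rho_B$ into the full maximally mixed state on $\mathbb{C}^{d^2}$, so that Lemma~\ref{th1} applies verbatim. This is immediate from the Schmidt form of $\ket{\psi}$, so I expect no real obstacle; the rest is routine eigenvalue bookkeeping.
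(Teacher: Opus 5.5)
Your proposal is correct and follows essentially the same route as the paper: factor out $|q_1-q_2|$, use $\tr_A\rho^E=\I_d/d$ to reduce to $\mathbf{B}=\ket{\psi}\bra{\psi}-\frac{\I_d}{d}\otimes\frac{\I_d}{d}$, and read off its spectrum ($1-1/d^2$ once, $-1/d^2$ with multiplicity $d^2-1$) to obtain the probe-independent value $2(1-1/d^2)$. Two parts of your write-up make the claim about \emph{all} maximally entangled states more explicit than the paper, which computes only with $\frac{1}{\sqrt d}\sum_i\ket{ii}$: writing the state in arbitrary Schmidt bases, and the $(\I\otimes U)\ket{\Phi^+}$ unitary-invariance cross-check.
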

 \begin{proof}
Let us consider a general maximally entangled state
$\rho^E=\ket{\phi^+}\bra{\phi^+}$, where
\[
\ket{\phi^+}=\frac{1}{\sqrt{d}}\sum_{i=0}^{d-1}\ket{i}_A\ket{i}_B .
\]
Acting with the depolarizing channel on subsystem $A$, we obtain
\[
(\Phi_{q_1}\otimes\I_d)\rho^E
=
q_1\rho^E+(1-q_1)\frac{\I_d}{d}\otimes\tr_A\rho^E .
\]
For any maximally entangled state, $\tr_A\rho^E=\I_d/d$. Hence, using
\eqref{phi}, we can write
\bea
&& ||(\Phi_{q_1}\otimes\I_d)\rho^E-(\Phi_{q_2}\otimes\I_d)\rho^E|| \nonumber\\
&=& |q_1-q_2|
.||\hspace{0.05 cm}
\ket{\phi^+}\bra{\phi^+}
-\frac{\I_d}{d}\otimes\frac{\I_d}{d}
||\nonumber\\
&=& |q_1-q_2|||\mathbf{B}|| .
\eea
It is straightforward to verify that $\ket{\phi^+}$ is an eigenvector of
$\mathbf{B}$.
\bea
\mathbf{B}\ket{\phi^+}
&=&
\left(1-\frac{1}{d^2}\right)\ket{\phi^+}.
\eea
Let $\ket{\phi'}$ be any other eigenvector orthogonal to $\ket{\phi^+}$. Then,
\bea
\mathbf{B}\ket{\phi'}
&=&
-\frac{1}{d^2}\ket{\phi'}.
\eea
Therefore, the eigenvalues of $\mathbf{B}$ are
$(1-\frac{1}{d^2})$ with multiplicity $1$ and
$-\frac{1}{d^2}$ with multiplicity $(d^2-1)$.
Consequently,
\bea\label{B}
||\mathbf{B}||
&=&
1-\frac{1}{d^2}
+(d^2-1)\left|-\frac{1}{d^2}\right|\nonumber\\
&=&
2\left(1-\frac{1}{d^2}\right).
\eea
Thus, the distinguishability of the two depolarizing channels in this
scenario is given by
\bea\label{const_E}
&&
\mathcal{D}_{ME}\left[\{\Phi_{q_1},\Phi_{q_2}\},\{1/2,1/2\}\right]\nonumber\\
&=&
\frac12+\frac14|q_1-q_2|
2\left(1-\frac{1}{d^2}\right)\nonumber\\
&=&
\frac12\left(1+|q_1-q_2|\left(1-\frac{1}{d^2}\right)\right),
\eea
which is again a constant. The suffix `ME' denotes the maximally entangled
state.
 \end{proof}
 From the last two theorems, it is evident that a maximally entangled probe outperforms a single-system probe. This naturally raises the question of whether the maximally entangled state is the optimal probe. We answer this question in the affirmative for the case of two depolarizing channels acting on $\mathbbm{C}^2$.
\begin{thm}
    The optimum probe for distinguishing two qubit depolarizing channels is maximally entangled probe.
\end{thm}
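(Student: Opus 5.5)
The plan is to compute the entanglement-assisted trace distance exactly for an arbitrary two-qubit probe and then maximize it, rather than to rely on a general bound. By linearity, for any $\rho_{AB}$ we have $(\Phi_{q}\otimes\I_2)\rho_{AB}=q\rho_{AB}+(1-q)\frac{\I_2}{2}\otimes\rho_B$ with $\rho_B=\tr_A\rho_{AB}$. Exactly as in \eqref{phi}, this gives
\[
\|(\Phi_{q_1}\otimes\I_2)\rho_{AB}-(\Phi_{q_2}\otimes\I_2)\rho_{AB}\|=|q_1-q_2|\,\Big\|\rho_{AB}-\tfrac{\I_2}{2}\otimes\rho_B\Big\|.
\]
The map $\rho_{AB}\mapsto\rho_{AB}-\frac{\I_2}{2}\otimes\rho_B$ is linear, so by convexity of the trace norm the maximum is attained on pure states. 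Since it suffices to take $d'=d=2$, the probe can be taken to be $\ket{\psi}=\sqrt{\lambda}\ket{00}+\sqrt{1-\lambda}\ket{11}$ with $\lambda\in[0,1]$. Here the local unitaries bringing $\ket\psi$ to Schmidt form have been absorbed. On $A$ this works because $\frac{\I_2}{2}$ is unitarily invariant; on $B$ it works trivially. Neither changes the trace norm.

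Next I will diagonalize $\mathbf{C}_\lambda=\ket\psi\bra\psi-\frac{\I_2}{2}\otimes\mathrm{diag}(\lambda,1-\lambda)$ in the basis $\{\ket{00},\ket{01},\ket{10},\ket{11}\}$. The operator decouples into pieces:
\begin{itemize}
\item On $\ket{01}$ and $\ket{10}$ it is diagonal, with eigenvalues $-(1-\lambda)/2$ and $-\lambda/2$, whose absolute values sum to $1/2$.
\item On $\mathrm{span}\{\ket{00},\ket{11}\}$ it is the $2\times2$ block with diagonal entries $\lambda/2$, $(1-\lambda)/2$ and off-diagonal entry $\sqrt{\lambda(1-\lambda)}$.
\end{itemize}
The block has trace $1/2$ and determinant $-\frac34\lambda(1-\lambda)\le 0$. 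Its eigenvalues are therefore $\frac12\big(\frac12\pm\sqrt{\frac14+3\lambda(1-\lambda)}\big)$, and their absolute values sum to $\sqrt{\frac14+3\lambda(1-\lambda)}$. Hence
\[
\|\mathbf{C}_\lambda\|=\frac12+\sqrt{\frac14+3\lambda(1-\lambda)}.
\]

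Finally, this is strictly increasing in $\lambda(1-\lambda)$, so it is maximized uniquely at $\lambda=1/2$, where it equals $3/2=2(1-\frac14)$. This recovers \eqref{B} for $d=2$, so by \eqref{const_E} the maximally entangled probe is optimal. As a consistency check, $\lambda\in\{0,1\}$ gives $1=2(1-\frac12)$, matching Lemma~\ref{th1}. Moreover, any $\lambda\in(0,1)$, i.e.\ any nonzero entanglement, already beats the single-system value.

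The only step needing care is the reduction to pure Schmidt-form probes. Convexity handles mixed states, and local unitary invariance handles the basis choice. The latter requires checking that $\frac{\I_2}{2}\otimes\rho_B$ transforms covariantly under the local unitaries. This holds because $\rho_B$ transforms under $U_B$ exactly as $\rho_{AB}$ does, while $\frac{\I_2}{2}$ is invariant under $U_A$. The eigenvalue computation itself is routine.
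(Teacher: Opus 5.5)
Your proof is correct and takes essentially the same route as the paper. The paper also fixes a Schmidt-form qubit--qubit probe and diagonalizes $\ket{\psi}\bra{\psi}-\frac{\I_2}{2}\otimes\tr_A\ket{\psi}\bra{\psi}$, obtaining $\frac12+\frac12\sqrt{1+12g(1-g)}$ (identical to your $\frac12+\sqrt{\frac14+3\lambda(1-\lambda)}$) and maximizing at $g=1/2$. Your explicit reduction to pure Schmidt-form probes, via convexity and the local-unitary covariance of $\rho_{AB}\mapsto\rho_{AB}-\frac{\I_2}{2}\otimes\rho_B$, fills in a step the paper takes for granted.
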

\begin{proof}
    We have to prove that maximally entangled state is strictly better probe than the non-maximally entangled state as we already know maximally entangled probes are better than single system. Let us take a general qubit-qubit non-maximally entangled probe $\ket{\phi^-}=\sqrt{g}\ket{s}\ket{t}+e^{\mathbbm{i}z}\sqrt{1-g}\ket{s^\perp}\ket{t^\perp}$. With this probe,
 \bea
&& ||(\Phi_{q_1}\otimes\I_d)\ket{\phi^-}\bra{\phi^-}-(\Phi_{q_2}\otimes\I_d)\ket{\phi^-}\bra{\phi^-}\hspace{0.05 cm}|| \nonumber\\
&=& |q_1-q_2|.||\ket{\phi^-}\bra{\phi^-}-\I_2/2\otimes\tr_A\ket{\phi^-}\bra{\phi^-}\hspace{0.05 cm}||\nonumber\\
&=&|q_1-q_2|.|| \mathbf{F}||.
    \eea
Here $\tr_A\ket{\phi^-}\bra{\phi^-}=g\ket{s}\bra{s}+(1-g)\ket{t}\bra{t}$. So $\mathbf{F}= 
\begin{pmatrix}
g-\frac{g}{2} & 0 & 0 & e^{-\mathbbm{i}z}\sqrt{g(1-g)} \\
0 & -\frac{1-g}{2} & 0 & 0 \\
0 & 0 & -\frac{g}{2} & 0\\
e^{\mathbbm{i}z}\sqrt{g(1-g)} & 0 & 0 & (1-g)-\frac{1-g}{2}
\end{pmatrix}
.$
The eigenvalues of the above matrix are $-\frac{1-g}{2}, -\frac{g}{2}, \frac{1\pm \sqrt{1+12g(1-g)}}{4}$. Consequently,
\bea
||\mathbf{F}||&=& \frac12 +\frac12\sqrt{1+12g(1-g)}.
\eea
The value of $||\mathbf{F}||$ is maximum when $g=1/2$, which indicates the probe is maximally entangled. One can cross-check that $||\mathbf{F}||=3/2$ at $g=1/2$ and this is same as the value of $||\mathbf{B}||$ of \eqref{B} with $d=2$. 
\end{proof}
Moving on, we examine the usefulness of entanglement in comparison to its absence. In the following lemma, we show that entangled states—irrespective of their degree of entanglement—are always advantageous over single-system probes in distinguishing two qubit depolarizing channels.
\begin{thm}
    Any entangled probe gives better distinguishing probability between any two qubit depolarizing channels with respect to single system probe.
\end{thm}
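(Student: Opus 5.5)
The plan is to compare the trace norm obtained from a general pure entangled two-qubit probe with the single-system value from Lemma~\ref{th1}, and to show that the former is strictly larger whenever the probe carries any entanglement.

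By Lemma~\ref{th1} and \eqref{rho_I_d} with $d=2$, every pure single-system probe gives $||\Phi_{q_1}(\rho)-\Phi_{q_2}(\rho)||=|q_1-q_2|\cdot 1$. By convexity of the trace norm, mixed single-system probes can do no better. Hence the single-system distinguishability is $\frac12\left(1+\frac12|q_1-q_2|\right)$.

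For the entangled side, I will first restrict to pure probes. Any pure two-qubit state has a Schmidt form $\ket{\phi^-}=\sqrt{g}\ket{s}\ket{t}+e^{\mathbbm{i}z}\sqrt{1-g}\ket{s^\perp}\ket{t^\perp}$ with $g\in[0,1]$, and it is entangled exactly when $g\in(0,1)$. The computation in the proof of the previous theorem already gives
\[
||(\Phi_{q_1}\otimes\I_2)\ket{\phi^-}\bra{\phi^-}-(\Phi_{q_2}\otimes\I_2)\ket{\phi^-}\bra{\phi^-}||=|q_1-q_2|\,||\mathbf{F}||,
\qquad
||\mathbf{F}||=\tfrac12+\tfrac12\sqrt{1+12g(1-g)}.
\]
I will double-check that the eigenvalues of $\mathbf{F}$ are correct for all $g$ and not only at $g=1/2$. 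The $2\times2$ block on $\ket{s t},\ket{s^\perp t^\perp}$ has trace $1/2$ and determinant $\frac{g(1-g)}{4}-g(1-g)=-\frac34 g(1-g)$. This gives eigenvalues $\frac{1\pm\sqrt{1+12g(1-g)}}{4}$, one of each sign. The other two eigenvalues, $-g/2$ and $-(1-g)/2$, are nonpositive. Summing the absolute values reproduces $||\mathbf{F}||$.

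Since $g(1-g)>0$ precisely for $g\in(0,1)$, we have $\sqrt{1+12g(1-g)}>1$, so $||\mathbf{F}||>1$ exactly when the probe is entangled. Consequently
\[
\mathcal{D}_E=\tfrac12+\tfrac14|q_1-q_2|\,||\mathbf{F}||>\tfrac12+\tfrac14|q_1-q_2|=\mathcal{D}_S
\]
whenever $q_1\neq q_2$. When $q_1=q_2$ the channels are identical and no probe helps, so the statement is understood for distinct channels. As a consistency check, the product case $g\in\{0,1\}$ gives $||\mathbf{F}||=1$, matching Lemma~\ref{th1}.

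The main obstacle is mixed entangled probes, since convexity only tells us that extremal (pure) probes are optimal; it does not say that every entangled mixed state beats single systems. I would first state the theorem for pure entangled probes, which is the case the paper's framework addresses. If mixed states are to be included, I would try writing $||(\Phi_{q_1}-\Phi_{q_2})\otimes\I(\rho_{AB})||=|q_1-q_2|\,||\rho_{AB}-\frac{\I_2}{2}\otimes\rho_B||$. I would then bound this from below using a witness operator, in the spirit of the identity $\rho-\I\otimes\rho_B$ (reduction criterion), which is not positive semidefinite exactly for states violating the reduction criterion. For two qubits that set coincides with the entangled states, and this is what should push the norm above $1$. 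That last step is the one I expect to need the most care.
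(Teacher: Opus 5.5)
Your argument for pure probes is the paper's proof. The paper compares $\|\rho-\I_2/2\|=1$ from \eqref{rho_I_d} with $\|\mathbf{F}\|=\frac12+\frac12\sqrt{1+12g(1-g)}$ and notes that equality holds only at $g\in\{0,1\}$. Your check of the Schmidt block (trace $1/2$, determinant $-\frac34 g(1-g)$) is correct. The restriction $q_1\neq q_2$ is a genuine requirement that the paper leaves implicit.

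Drop the mixed-state extension, though. It is not merely delicate: the statement is false for mixed entangled probes. Take the isotropic state $\rho_p=p\ket{\phi^+}\bra{\phi^+}+(1-p)\I_4/4$. Then $\rho_B=\I_2/2$, so
$\rho_p-\frac{\I_2}{2}\otimes\rho_B=p\bigl(\ket{\phi^+}\bra{\phi^+}-\I_4/4\bigr)$.
By \eqref{B} with $d=2$, its trace norm is $\frac{3p}{2}$. The state is entangled for $p>1/3$, since its partial transpose has eigenvalue $(1-3p)/4$. Yet for $1/3<p<2/3$ it gives $\frac{3p}{2}<1$; for example, $p=1/2$ gives $3/4$. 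Such a probe is strictly worse than every pure single-system probe.

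The reduction-criterion idea cannot rescue this. Violating the reduction criterion is a sign condition on $\I\otimes\rho_B-\rho$. It gives no quantitative lower bound on $\|\rho-\frac12\I\otimes\rho_B\|$, and that lower bound is what you would need.

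So the theorem is true exactly for pure entangled probes, which the paper implicitly assumes through the Schmidt form. Your pure-state argument already proves that version in full.
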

\begin{proof}
    From \eqref{rho_I_d}, we get $||\rho-\I_2/2||=1 \leqslant \frac12 +\frac12\sqrt{1+12g(1-g)}=||\mathbf{B}||$. The equality is achieved when $g=0$ or $1$, which means the probe is a product state. That completes our proof.  
\end{proof}
\subsection{Dephasing channels}
Generally dephasing channels are known as 'phase-flip channels'. This channel acts as follows on any given
density operator:
\bea
\mathcal{G}_r(\rho)=r\rho+(1-r) Z\rho Z;
\eea
$Z$ is a $d$-dimensional phase-flip gate, which can be written as $Z=\sum_{k=0}^{d-1}\omega^k\ket{k}\bra{k}$, where $\omega=e^{\mathbbm{i}2\pi/d}$. We want to examine the hierarchy of the probes in the discrimination of two dephasing channels with noise parameter $r_1$ and $r_2$.
\begin{thm}
    For distinguishing two dephasing channels optimally, single system probe is sufficient.
\end{thm}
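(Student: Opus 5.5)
Since $\mathcal{G}_{r}$ is affine in $r$, the difference of outputs is $(\mathcal{G}_{r_1}-\mathcal{G}_{r_2})(\rho)=(r_1-r_2)(\rho-Z\rho Z^\dagger)$. The same holds for any input $\rho_{AB}$ with $Z$ replaced by $Z\otimes\I$. Hence the task reduces to comparing two quantities:
\[
\max_{\rho}\|\rho-Z\rho Z^\dagger\| \quad\text{versus}\quad \max_{\rho_{AB}}\|\rho_{AB}-(Z\otimes\I)\rho_{AB}(Z\otimes\I)^\dagger\|.
\]
By convexity of the trace norm, both maxima are attained on pure states. For pure inputs $\ket{\psi}$ and $U\ket{\psi}$ with $U$ unitary, the rank-two difference has trace norm $2\sqrt{1-|\bra{\psi}U\ket{\psi}|^2}$. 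So the plan is to show that the minimum of $|\bra{\psi}U\ket{\psi}|$ over single-system states equals its minimum over bipartite states, where $U=Z$ or $U=Z\otimes\I$.

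For the single-system side, I will write $|\bra{\psi}Z\ket{\psi}|=|\sum_k p_k\omega^k|$ with $p_k=|\alpha_k|^2$. This is the modulus of a point in the convex hull of the $d$-th roots of unity. That hull contains the origin, since the uniform distribution $p_k=1/d$ gives $\sum_k\omega^k=0$. Hence the state $\ket{+}=\frac{1}{\sqrt d}\sum_k\ket{k}$ gives $\bra{+}Z\ket{+}=0$, so $\ket{+}$ and $Z\ket{+}$ are orthogonal and $\|\rho-Z\rho Z^\dagger\|=2$, its maximal possible value. Alternatively, $p_0=p_1=1/2$ already suffices when $d=2$.

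For the entangled side, the trace norm of a difference of two density matrices is always at most $2$. No probe can therefore exceed the single-system value, and $\mathcal{D}_E=\mathcal{D}_S=\frac12(1+\frac12|r_1-r_2|)$. More generally, for $\ket{\Psi}_{AB}$ one has $\bra{\Psi}Z\otimes\I\ket{\Psi}=\tr(Z\rho_A)$, a convex combination of eigenvalues of $Z$. This explains why entanglement cannot help: the bipartite overlap is determined by the reduced state, which is a mixture of the same form. The same argument covers a general unitary $W$ in place of $Z$ (the generalization mentioned in the introduction): the achievable overlaps in both scenarios form the numerical range of $W$, namely the convex hull of its eigenvalues. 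In the single-system case this holds because the numerical range of a normal operator is the hull of its spectrum.

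The only point requiring care is checking that the reduction to pure states and the overlap formula are legitimate in both scenarios, which follows from convexity as already noted in the paper. The content of the argument is recognizing that the set of achievable values of $\langle U\rangle$ is identical with or without an ancilla, because $\tr(Z\rho_A)$ ranges over the same convex set as $\bra{\psi}Z\ket{\psi}$. For the dephasing $Z$ specifically, even that is unnecessary, since the single-system probe already attains the trivial upper bound.
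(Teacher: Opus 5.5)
Your argument is correct and follows the paper's route: reduce to pure probes, use the rank-two formula $2\sqrt{1-|\langle\psi|Z|\psi\rangle|^2}$, and note that the uniform superposition puts $\sum_k p_k\omega^k$ at the origin. You also make explicit the trace-norm ceiling of $2$ and the $\tr(Z\rho_A)$ observation, which the paper only asserts when it says no entangled probe can exceed this value.

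There is one arithmetic slip in your final value. With $\|\cdot\|=2|r_1-r_2|$, the equiprobable formula $\frac12+\frac14\|\cdot\|$ gives $\frac12(1+|r_1-r_2|)$, not $\frac12\bigl(1+\frac12|r_1-r_2|\bigr)$. As a check, $r_1=1$, $r_2=0$ must give success probability $1$.
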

\begin{proof}
Consider the optimal single-system probe state $\tau$. Under the action of two
dephasing channels, we obtain
\bea
||\mathcal{G}_{r_1}(\tau)-\mathcal{G}_{r_2}(\tau)|| 
&=& 
|r_1-r_2|\,||\tau-Z\tau Z|| .
\eea

Following the approach of Lemma \ref{th1}, let us take $\tau=\ket{b}\bra{b}$
with $\ket{b}=\sum_k\beta_k\ket{k}$. Then,
\be
\tau-Z\tau Z
=
\sum_{k,l=0}^{d-1}
\left(
\beta_k\beta_l^*\ket{k}\bra{l}
-
\beta_k\beta_l^*\omega^k(\omega^l)^*\ket{k}\bra{l}
\right)
=
\mathbf{C}.
\ee
Our task is to determine the eigenvalues of $\mathbf{C}$. Observe that
\[
\mathbf{C}
=
\ket{b}\bra{b}
-
\ket{b'}\bra{b'},
\qquad
\text{where }
\ket{b'}=\sum_k\omega^k\beta_k\ket{k}.
\]
The rank of $\mathbf{C}$ is at most $2$, and thus all non-zero eigenvalues lie
in the subspace spanned by $\{\ket{b},\ket{b'}\}$. We may express
\[
\ket{b'}
=
c\ket{b}
+
\sqrt{1-|c|^2}\ket{b^\perp},
\]
where $\ket{b^\perp}$ is orthogonal to $\ket{b}$. In the basis
$\{\ket{b},\ket{b^\perp}\}$, the operator $\mathbf{C}$ takes the matrix form
\bea
\mathbf{C}
&=&
\begin{pmatrix}
1 - |c|^2 & -c^{*}\sqrt{1 - |c|^2} \\
- c\sqrt{1 - |c|^2} & -(1 - |c|^2)
\end{pmatrix}.
\eea

The eigenvalues of this matrix are
\bea
\lambda_{\pm}
&=&
\pm\sqrt{1-|c|^2}\nonumber\\
&=&
\pm\sqrt{1-|\langle b|b' \rangle|^2}\nonumber\\
&=&
\pm\sqrt{1-\left|\sum_k |\beta_k|^2\omega^k\right|^2}\nonumber\\
&=&
\pm\sqrt{1-\left| \text{con}\{\omega^k\}\right|^2},
\eea
where $\text{con}\{\omega^k\}$ denotes the set of complex numbers expressible
as convex combinations of $\{\omega^k\}_k$. Therefore,
\be
||\tau-Z\tau Z||
=
2\sqrt{1-\left| \text{con}\{\omega^k\}\right|^2}.
\ee

The distinguishability of two
dephasing channels yields as:
\bea\label{phasing}
&&
\mathcal{D}_S\left[\{\mathcal{G}_{r_1},\mathcal{G}_{r_2}\},\{1/2,1/2\}\right]\nonumber\\
&=&
\frac12\left(
1
+
|r_1-r_2|
\sqrt{
1-\min\left| \text{con}\{\omega^k\}\right|^2
}
\right).
\eea
Here, $\min|\cdot|$ denotes the minimum modulus over all complex numbers in
the convex hull. Since the convex hull of $\{\omega^k\}_k$ always contains the
origin in the Argand plane, we have
\[
\min\left| \text{con}\{\omega^k\}\right| = 0,
\]
which is attained by choosing $\beta_k = 1/\sqrt{d}$ for all $k$. Hence,
\eqref{phasing} simplifies to
\bea\label{phasing1}
&&
\mathcal{D}_S\left[\{\mathcal{G}_{r_1},\mathcal{G}_{r_2}\},\{1/2,1/2\}\right]\nonumber\\
&=&
\frac12\left(1+|r_1-r_2|\right).
\eea

This expression represents the optimal distinguishability of two dephasing
channels, achievable using a single-system probe.
\end{proof}
The same value of distinguishability can be achieved with the entangled probe $\frac{1}{\sqrt{d}}\sum_k\ket{kk}$, but no entangled probe can exceed the value of \eqref{phasing1}.

 One should note that the distinguishability of two dephasing channels is constant (as noise parameters are fixed) irrespective of dimension.
\subsubsection{Generalization of Dephasing Channels}
Now we can think of generalized version of dephasing channel where an arbitrary unitary can act instead of phase flip gate. So this type of channels can be described as,
\bea
\mathcal{J}_r(\rho)=r\rho+(1-r)U\rho U^\dagger.
\eea
\begin{thm}
    To distinguish two generalized dephasing channels $\mathcal{J}_{r_1}$ and $\mathcal{J}_{r_2}$, single system probe is sufficient.
\end{thm}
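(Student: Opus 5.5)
We reduce both scenarios to a quantity involving only the unitary $U$ and show they coincide. The difference of the two output states is linear in the noise parameter. For a single-system probe $\tau$,
\[
\|\mathcal{J}_{r_1}(\tau)-\mathcal{J}_{r_2}(\tau)\| = |r_1-r_2|\,\|\tau-U\tau U^\dagger\|.
\]
For an entangled probe $\rho_{AB}$, the same computation gives
\[
|r_1-r_2|\,\|\rho_{AB}-(U\otimes\I_d)\rho_{AB}(U^\dagger\otimes\I_d)\|.
\]
By convexity of the trace norm it suffices to take pure probes in both cases. So the statement reduces to showing that
\[
\max_{\ket{\psi}_{AB}}\bigl\|\ket{\psi}\bra{\psi}-\ket{\psi'}\bra{\psi'}\bigr\|, \qquad \ket{\psi'}=(U\otimes\I)\ket{\psi},
\]
equals the corresponding maximum over single-system pure states $\ket{b}$ with $\ket{b'}=U\ket{b}$.

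Next we evaluate both quantities exactly, following the proof for the phase-flip case. The operator $\ket{b}\bra{b}-\ket{b'}\bra{b'}$ has rank at most two. Its eigenvalues are $\pm\sqrt{1-|\langle b|U|b\rangle|^2}$, so its trace norm is $2\sqrt{1-|\langle b|U|b\rangle|^2}$. The same holds for the entangled case, with overlap $\langle\psi|(U\otimes\I)|\psi\rangle$. Writing $\ket{\psi}=\sum_k\sqrt{\lambda_k}\ket{e_k}\ket{f_k}$ in Schmidt form, this overlap equals $\tr(U\sigma_A)$ with $\sigma_A=\sum_k\lambda_k\ket{e_k}\bra{e_k}$. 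The overlap is therefore a convex combination of the single-system overlaps $\langle e_k|U|e_k\rangle$.

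Both problems therefore become minimizations of the modulus of a point in the numerical range $W(U)=\{\langle b|U|b\rangle\}$. In the single-system case we minimize over $W(U)$ itself. In the entangled case we minimize over convex combinations of points of $W(U)$. Since $W(U)$ is convex by the Toeplitz--Hausdorff theorem, any such convex combination already lies in $W(U)$. In fact, $\tr(U\sigma_A)\in W(U)$ for every density matrix $\sigma_A$, because $W(U)$ is the convex hull of the eigenvalues of the normal matrix $U$. Hence every value achievable with entanglement is achievable by some single-system pure state $\ket{b}$.

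Concretely, if $U=\sum_j\mu_j\ket{u_j}\bra{u_j}$, then $\tr(U\sigma_A)=\sum_j p_j\mu_j$ with $p_j=\bra{u_j}\sigma_A\ket{u_j}$. The state $\ket{b}=\sum_j\sqrt{p_j}\ket{u_j}$ reproduces the same overlap. This yields
\[
\mathcal{D}_S=\mathcal{D}_E=\frac12\Bigl(1+|r_1-r_2|\sqrt{1-\min_{z\in\mathrm{con}\{\mu_j\}}|z|^2}\Bigr).
\]
This generalizes \eqref{phasing}.

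The main obstacle is justifying that the entangled overlap lies in the single-system range, which is the explicit construction of $\ket{b}$ above. That step is short but is the crux. The remaining work is routine: checking that the rank-two eigenvalue computation carries over unchanged from the dephasing proof, and handling mixed entangled probes via convexity.
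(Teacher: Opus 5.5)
Your proposal is correct and takes the same route as the paper: linearity in $r$, the rank-two trace-norm formula $\|\tau-U\tau U^\dagger\|=2\sqrt{1-|\langle b|U|b\rangle|^2}$, and the spectral decomposition of $U$, which reduces the optimization to the minimum-modulus point of the convex hull of the eigenvalues of $U$. The paper only asserts that an entangled probe ``ultimately leads to the same expression'', whereas you prove it: the Schmidt decomposition gives the entangled overlap as $\tr(U\sigma_A)=\sum_j p_j\mu_j$, the single-system state $\sum_j\sqrt{p_j}\ket{u_j}$ reproduces this overlap, and convexity handles mixed probes.
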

\begin{proof}
 The argument proceeds along the same lines as in the proof of the previous
theorem. In the present case, we need to evaluate the trace norm of
$(\tau-U\tau U^\dagger)$, which again is an operator of rank at most two.
Consequently,
\bea\label{tau_UrhoU}
|| \tau-U\tau U^\dagger||
&=&
2\sqrt{1-|\langle\tau|U\tau\rangle|^2}.
\eea

Any unitary operator $U$ admits the spectral decomposition
$U=\sum_i e^{\mathbbm{i}\Theta_i}\ket{\psi_i}\bra{\psi_i}$, and the probe state
can be expanded as $\ket{\tau}=\sum_i \gamma_i\ket{\psi_i}$. Substituting these
expressions into \eqref{tau_UrhoU}, we obtain
\bea
|| \tau-U\tau U^\dagger||
&=&
2\sqrt{1-\left|\sum_i |\gamma_i|^2 e^{\mathbbm{i}\Theta_i}\right|^2}\nonumber\\
&=&
2\sqrt{1-\left|\text{con}\{ e^{\mathbbm{i}\Theta_i}\}\right|^2},
\eea
where $\text{con}\{\cdot\}$ denotes the convex hull of the set
$\{e^{\mathbbm{i}\Theta_i}\}_i$.

The resulting distinguishability is therefore given by
\bea\label{gphasing1}
&&
\mathcal{D}_S\left[\{\mathcal{J}_{r_1},\mathcal{J}_{r_2}\},\{1/2,1/2\}\right]\nonumber\\
&=&
\frac12\left(
1
+
|r_1-r_2|
\sqrt{
1-\min\left|\text{con}\{ e^{\mathbbm{i}\Theta_i}\}\right|^2
}
\right).
\eea

The optimal distinguishability is achieved when the distance between the origin
and the convex hull with vertices $\{e^{\mathbbm{i}\Theta_i}\}_i$ is minimized.
Accordingly, the optimal single-system probe is of the form
$\sum_i\gamma_i\ket{\psi_i}$, where $\{\ket{\psi_i}\}_i$ are the eigenvectors of
$U$ associated with eigenvalues $\{e^{\mathbbm{i}\Theta_i}\}_i$. Employing an
entangled probe ultimately leads to the same expression \eqref{gphasing1}.
Hence, the use of entanglement does not enhance the distinguishability in this scenario.
\end{proof}

Another type of distinguishing task can be devised if we consider two channels with same noise parameter but with different unitary. It can be proved that optimum distinguishability can be achieved with single system probe. The proof goes along the same line. Therefore, we do not discuss that case here.
\subsection{Amplitude damping channels}
The amplitude damping channel is an approximation to a noisy evolution that occurs in many physical systems. We define this channel in qubit computational basis. Any qubit amplitude damping channel $\mathcal{H}_\mu$ acts on a density operator $\rho$ as follows:
\bea
\mathcal{H}_\mu (\rho) &=& H_0\rho H_0^\dagger +H_1\rho H_1^\dagger;
\eea
where $H_0=\ket{0}\bra{0}+\sqrt{\mu}\ket{1}\bra{1}$ and $H_1=\sqrt{1-\mu}\ket{0}\bra{1}$ with $\mu\in(0,1)$. We consider the discrimination of two channels $\mathcal{H}_{\mu_1}$ and $\mathcal{H}_{\mu_2}$. Without loss of generality, we can take $\mu_1 >\mu_2$.

    For a single system $\delta$, we can write,
\bea
&&||\mathcal{H}_{\mu_1}(\delta)-\mathcal{H}_{\mu_2} (\delta)||\nonumber\\
&=&||(\mu_1-\mu_2)\ket{1}\bra{1}\delta\ket{1}\bra{1}-(\mu_1-\mu_2)\ket{0}\bra{1}\delta\ket{1}\bra{0}\nonumber\\
&&+(\sqrt{\mu_1}-\sqrt{\mu_2})\ket{1}\bra{1}\delta\ket{0}\bra{0}\nonumber\\
&&+(\sqrt{\mu_1}-\sqrt{\mu_2})\ket{0}\bra{0}\delta\ket{1}\bra{1}||\nonumber\\
&=& ||\mathbf{L} ||.
\eea
For optimality, we can take $\delta$ a pure state, i.e., $\delta=\ket{d}\bra{d}$, where $\ket{d}=\cos\frac{\theta}{2}\ket{0}+e^{\mathbbm{i}\Delta}\sin\frac{\theta}{2}\ket{1}$. So $\mathbf{L}$ can be written as,
\begin{widetext}
\bea\label{F}
\mathbf{L}&=&
\begin{pmatrix}
    -(\mu_1-\mu_2)\sin^2\frac{\theta}{2} & (\sqrt{\mu_1}-\sqrt{\mu_2})e^{-\mathbbm{i}\Delta}\cos\frac{\theta}{2}\sin\frac{\theta}{2}\\
    (\sqrt{\mu_1}-\sqrt{\mu_2})e^{\mathbbm{i}\Delta}\cos\frac{\theta}{2}\sin\frac{\theta}{2} & (\mu_1-\mu_2)\sin^2\frac{\theta}{2}
\end{pmatrix}\nonumber\\
\eea
Consequently, 
\bea
||\mathbf{L}||=2\left[(\sqrt{\mu_1}-\sqrt{\mu_2})^2\sin^2\frac{\theta}{2}\cos^2\frac{\theta}{2}+(\mu_1-\mu_2)^2\sin^4\frac{\theta}{2}\right]^{1/2}.\nonumber\\
\eea

We maximize the value of $||\mathbf{L}||$ for $\theta$ and find that,
\[
\max_{\theta}
||\mathbf{L}||
=
\begin{cases}
2(\mu_1-\mu_2),
& (\sqrt{\mu_1}+\sqrt{\mu_2})^2 \ge 1/2,
\\[10pt]
\displaystyle
\dfrac{ (\sqrt{\mu_1}-\sqrt{\mu_2}) }{\sqrt{ 1 - (\sqrt{\mu_1}+\sqrt{\mu_2})^2 } },
& (\sqrt{\mu_1}+\sqrt{\mu_2})^2 <1/2.
\end{cases}
\]
     
\end{widetext}
The corresponding values of $\theta$ is, $\theta=\pi$ when $(\sqrt{\mu_1}+\sqrt{\mu_2})^2 \ge 1/2$ and $\theta=2\arcsin\!\left(
\sqrt{
\frac{1}
{2\big[1-(\sqrt{\mu_1}+\sqrt{\mu_2})^2\big]}
}
\right)$ when $(\sqrt{\mu_1}+\sqrt{\mu_2})^2 < 1/2$.
Consequently, when $(\sqrt{\mu_1}+\sqrt{\mu_2})^2 \ge 1/2$, distinguishability reads as,
\bea\label{mu1}
&&
\mathcal{D}_S\left[\{\mathcal{H}_{\mu_1},\mathcal{H}_{\mu_2}\},\{1/2,1/2\}\right]\nonumber\\
&=&
\frac12\left(1+
(\mu_1-\mu_2)
\right).
\eea
When $(\sqrt{\mu_1}+\sqrt{\mu_2})^2 < 1/2$, distinguishability would be,
\bea\label{mu2}
&&
\mathcal{D}_S\left[\{\mathcal{H}_{\mu_1},\mathcal{H}_{\mu_2}\},\{1/2,1/2\}\right]\nonumber\\
&=&
\frac12\left(1+
\dfrac{ (\sqrt{\mu_1}-\sqrt{\mu_2}) }{2\sqrt{ 1 - (\sqrt{\mu_1}+\sqrt{\mu_2})^2 } }
\right).
\eea
We now turn to the use of entangled probes. For complexity, we mainly restricted ourselves into maximally entangled probe. First we will show that all the maximally entangled states are equivalent resource for distinguishing two amplitude damping channels.
\begin{lemma}
    All the maximally entangled probes are equivalent in this discrimination task.
\end{lemma}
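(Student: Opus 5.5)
Every two-qubit maximally entangled state can be written as $\ket{\phi_W}=(\I_2\otimes W)\ket{\phi^+}$, where $\ket{\phi^+}=\frac{1}{\sqrt2}(\ket{00}+\ket{11})$ and $W$ is a single-qubit unitary acting on the ancilla $B$. The plan is to use this parametrization. Since the amplitude damping channel $\mathcal{H}_\mu$ acts only on subsystem $A$, it commutes with the local unitary on $B$:
\bea
(\mathcal{H}_{\mu}\otimes\I)(\ket{\phi_W}\bra{\phi_W})=(\I_2\otimes W)\left[(\mathcal{H}_{\mu}\otimes\I)(\ket{\phi^+}\bra{\phi^+})\right](\I_2\otimes W^\dagger).\nonumber
\eea
Taking the difference of the two outputs for $\mu_1$ and $\mu_2$, the operator whose trace norm enters $\mathcal{D}_E$ is a unitary conjugate, by $\I_2\otimes W$, of the corresponding operator obtained with $\ket{\phi^+}$.

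The next step invokes the unitary invariance of the trace norm, $||V X V^\dagger||=||X||$. This shows that
\be
||(\mathcal{H}_{\mu_1}\otimes\I)(\phi_W)-(\mathcal{H}_{\mu_2}\otimes\I)(\phi_W)||\nonumber
\ee
is independent of $W$. The distinguishability is therefore the same for every maximally entangled probe, which proves the lemma. The argument uses no property of $\mathcal{H}_\mu$ beyond its acting on $A$ alone, so it is the same mechanism that underlies the equivalence lemma for depolarizing channels.

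For later comparison with \eqref{mu1} and \eqref{mu2}, I would also compute the common value explicitly using $\ket{\phi^+}$. In the basis $\{\ket{00},\ket{01},\ket{10},\ket{11}\}$, the output of $\mathcal{H}_\mu\otimes\I$ has the following nonzero entries:
\begin{itemize}
\item $\tfrac12$ at position $(00,00)$;
\item $\tfrac{\sqrt\mu}{2}$ at positions $(00,11)$ and $(11,00)$;
\item $\tfrac{\mu}{2}$ at position $(11,11)$;
\item $\tfrac{1-\mu}{2}$ at position $(01,01)$.
\end{itemize}
The difference of the two outputs then splits into two blocks:
\begin{itemize}
\item a $1\times1$ block $-\frac{\mu_1-\mu_2}{2}$ on $\ket{01}$;
\item a $2\times2$ block on $\{\ket{00},\ket{11}\}$ with zero top-left entry, off-diagonal entries $\frac{\sqrt{\mu_1}-\sqrt{\mu_2}}{2}$, and bottom-right entry $\frac{\mu_1-\mu_2}{2}$.
\end{itemize}
The $2\times2$ block has negative determinant, so its trace norm is $\sqrt{(\mu_1-\mu_2)^2/4+(\sqrt{\mu_1}-\sqrt{\mu_2})^2}$.

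The main obstacle is not analytic. It is making sure that every maximally entangled state, including those with an arbitrary Schmidt basis on $A$, is of the form $(\I\otimes W)\ket{\phi^+}$. This follows from the transpose trick $(U\otimes\I)\ket{\phi^+}=(\I\otimes U^T)\ket{\phi^+}$, which moves any local unitary on $A$ onto $B$. If one instead tried to parametrize by local unitaries on $A$, the argument would fail, because $\mathcal{H}_\mu$ is not unitarily covariant; routing everything through the ancilla avoids this.
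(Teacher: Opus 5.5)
Your proof is correct, and it takes a genuinely different route from the paper. The paper parametrizes a maximally entangled probe as $\frac{1}{\sqrt2}(\ket{\chi}_A\ket{0}_B+\ket{\chi^\perp}_A\ket{1}_B)$, with $\ket{\chi}$ given by Bloch angles $v,\mathbbm{V}$. This puts the freedom on the system $A$, where $\mathcal{H}_\mu$ acts. It then writes out the full $4\times4$ difference operator $\mathbf{X}$ term by term and observes that its trace norm, $\frac12(\mu_1-\mu_2)+\frac12[(\mu_1-\mu_2)^2+4(\sqrt{\mu_1}-\sqrt{\mu_2})^2]^{1/2}$, does not depend on $v,\mathbbm{V}$.

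You instead use the transpose trick to move all the freedom onto the ancilla. There the local unitary commutes with $\mathcal{H}_\mu\otimes\I$ and drops out by unitary invariance of the trace norm. The only computation left is the one for $\ket{\phi^+}$. Your block decomposition there (a $1\times1$ block of modulus $\frac12(\mu_1-\mu_2)$ plus a $2\times2$ block with negative determinant) reproduces exactly the value the paper obtains.

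Your route buys several things. The equivalence is explained by a symmetry rather than read off at the end of a calculation. It uses no property of amplitude damping, so the same argument works for any set of channels, in any dimension, and for isometric embeddings into a larger ancilla. It also covers every maximally entangled state automatically, including the relative phase between the two Schmidt terms; the paper's two-parameter family $(v,\mathbbm{V})$ leaves that phase implicit in the choice of $\ket{\chi^\perp}$, so strictly it needs your invariance argument to be complete. The paper's explicit route, in exchange, gets the numerical value and the parameter-independence from one computation without invoking the transpose trick. The cost is a lengthy expansion that does not reveal why the independence holds.
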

\begin{proof}
 Let us consider a general maximally entangled probe
$\delta^{E}=\ket{n}\bra{n}$, where
\[
\ket{n}
=
\frac{1}{\sqrt{2}}
\left(
\ket{\chi}_A\ket{0}_B
+
\ket{\chi^\perp}_A\ket{1}_B
\right).
\]
We choose
\[
\ket{\chi}
=
\cos\frac{v}{2}\ket{0}
+
e^{\mathbbm{i}\mathbbm{V}}\sin\frac{v}{2}\ket{1}.
\]
In this setting, we are required to evaluate the quantity
\begin{widetext}
\bea\label{X}
||\mathbf{X}||
&=&
||(\mathcal{H}_{\mu_1}\otimes \I_2)(\delta^{E})
-
(\mathcal{H}_{\mu_2}\otimes\I_2)(\delta^{E})||\nonumber\\
  &=& ||\frac{1}{2}\Big[
(\mu_2-\mu_1)\sin^2\frac{v}{2}\,\ket{00}\!\bra{00}
+(\mu_2-\mu_1)\cos^2\frac{v}{2}\,\ket{01}\!\bra{01}
+(\mu_1-\mu_2)\sin^2\frac{v}{2}\,\ket{10}\!\bra{10}
+(\mu_1-\mu_2)\cos^2\frac{v}{2}\,\ket{11}\!\bra{11}
\nonumber\\
&&+(\mu_1-\mu_2)\cos\frac{v}{2}\sin\frac{v}{2}
\big(
\ket{00}\!\bra{01}
+\ket{01}\!\bra{00}
-\ket{10}\!\bra{11}
-\ket{11}\!\bra{10}
\big)
\nonumber\\
&&+(\sqrt{\mu_1}-\sqrt{\mu_2})\sin\frac{v}{2}\cos\frac{v}{2}
\big(
e^{-i\mathbbm{V}}\ket{00}\!\bra{10}
+e^{i\mathbbm{V}}\ket{10}\!\bra{00}
\big)
+(\sqrt{\mu_1}-\sqrt{\mu_2})\sin^2\frac{v}{2}
\big(
e^{-i\mathbbm{V}}\ket{01}\!\bra{10}
+e^{i\mathbbm{V}}\ket{10}\!\bra{01}
\big)
\nonumber\\
&&-(\sqrt{\mu_1}-\sqrt{\mu_2})\cos^2\frac{v}{2}
\big(
e^{-i\mathbbm{V}}\ket{00}\!\bra{11}
+e^{i\mathbbm{V}}\ket{11}\!\bra{00}
\big)
-(\sqrt{\mu_1}-\sqrt{\mu_2})\cos\frac{v}{2}\sin\frac{v}{2}
\big(
e^{-i\mathbbm{V}}\ket{01}\!\bra{11}
+e^{i\mathbbm{V}}\ket{11}\!\bra{01}
\big)
\Big] ||\nonumber\\
&=&\frac12(\mu_1-\mu_2)+\frac12\left[(\mu_1-\mu_2)^2+4(\sqrt{\mu_1}-\sqrt{\mu_2})^2\right]^{1/2}.
  \eea    
\end{widetext}

The distinguishability in this case,
\bea\label{mu_ment}
&&
\mathcal{D}_{ME}\left[\{\mathcal{H}_{\mu_1},\mathcal{H}_{\mu_2}\},\{1/2,1/2\}\right]\nonumber\\
&=&
\frac12\left(1+
\frac14 (\mu_1-\mu_2)\right.\nonumber\\
&&+\left.\frac14\left[(\mu_1-\mu_2)^2+4(\sqrt{\mu_1}-\sqrt{\mu_2})^2\right]^{1/2}
\right).
\eea

Since the final expression does not depend on the parameter $v,\mathbbm{V}$,
different maximally entangled probes lead to same values of the trace
norm. Therefore, all maximally entangled probes are equivalent in this scenario.
\end{proof}
In this point, it is better to compare two kinds of probes in the discrimination task. We prove that maximally entangled probe is strictly worse probe with respect to single system.
\begin{thm}\label{4}
    Maximally entangled probe is strictly worse probe for the distinguishability of two qubit amplitude damping channels with respect to single system probe when $(\sqrt{\mu_1}+\sqrt{\mu_2})^2 \ge 1/2$.
\end{thm}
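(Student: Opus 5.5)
We compare the two closed-form trace norms already obtained. When $(\sqrt{\mu_1}+\sqrt{\mu_2})^2\ge 1/2$, the single-system optimum is $\max_\theta||\mathbf{L}||=2(\mu_1-\mu_2)$, attained at $\theta=\pi$. The maximally entangled value is $||\mathbf{X}||=\frac12(\mu_1-\mu_2)+\frac12\left[(\mu_1-\mu_2)^2+4(\sqrt{\mu_1}-\sqrt{\mu_2})^2\right]^{1/2}$ from the preceding lemma, independent of which maximally entangled state is used. Both enter the distinguishability through the same affine formula \eqref{D_S1}, so it suffices to show $||\mathbf{X}||<2(\mu_1-\mu_2)$ in this regime. (I will work with the trace norms directly rather than with the written forms \eqref{mu1} and \eqref{mu_ment}, whose prefactors should be checked against \eqref{D_S1} to be consistent.)

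Set $a=\sqrt{\mu_1}$ and $b=\sqrt{\mu_2}$ with $a>b$, so that $\mu_1-\mu_2=(a-b)(a+b)$. Dividing by $(a-b)>0$, the inequality becomes
\[
\tfrac12(a+b)+\tfrac12\sqrt{(a+b)^2+4}<2(a+b),
\]
that is, $\sqrt{s^2+4}<3s$ with $s=a+b>0$. Squaring, which is legitimate since both sides are positive, gives $s^2+4<9s^2$, i.e.\ $s^2>1/2$. This yields strict inequality whenever $(\sqrt{\mu_1}+\sqrt{\mu_2})^2>1/2$.

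The main obstacle is the boundary case $(\sqrt{\mu_1}+\sqrt{\mu_2})^2=1/2$, where the reduction gives equality, so the strict claim fails there. I would therefore state the result with strict inequality for $(\sqrt{\mu_1}+\sqrt{\mu_2})^2>1/2$ and note that equality holds exactly on the threshold; at that point both probes tie, and the single-system optimum there coincides continuously with the other branch. I would also double-check that the entangled trace norm $||\mathbf{X}||$ is correct, since the whole comparison rests on it. One check is to verify the eigenvalues of $\mathbf{X}$ at $v=0$, where the $4\times4$ matrix block-decomposes into a $2\times2$ block plus diagonal entries. Another is to confirm that the computed $||\mathbf{X}||$ does not exceed the trivial bound $2$ and correctly reduces to $0$ when $\mu_1=\mu_2$.
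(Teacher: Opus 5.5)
Your argument is correct and is essentially the paper's. Both reduce $\|\mathbf{X}\|<2(\mu_1-\mu_2)$, after factoring out $\sqrt{\mu_1}-\sqrt{\mu_2}$, to $\sqrt{s^2+4}<3s$, i.e.\ $s^2>1/2$ with $s=\sqrt{\mu_1}+\sqrt{\mu_2}$; the paper phrases this as a contradiction. Your prefactor worry is unfounded, since \eqref{mu1} and \eqref{mu_ment} do agree with \eqref{D_S1}, and $\|\mathbf{X}\|$ checks out. Your boundary observation is right and genuinely improves on the paper: its contradiction only rules out the maximally entangled probe doing strictly better, so it yields only $\le$ on the closed region, and at $(\sqrt{\mu_1}+\sqrt{\mu_2})^2=1/2$ the two probes tie, so strictness holds only for $(\sqrt{\mu_1}+\sqrt{\mu_2})^2>1/2$.
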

\begin{proof}
Suppose maximally entangled states give better probability than single systems in the given range. From \eqref{mu_ment} and \eqref{mu1}, it offers us,
\bea
\frac12\left[(\mu_1-\mu_2)^2+4(\sqrt{\mu_1}-\sqrt{\mu_2})^2\right]^{1/2} > \frac32(\mu_1-\mu_2)\nonumber\\
\eea
Simplifying this, we get
\bea
(\sqrt{\mu_1}+\sqrt{\mu_2})^2 < \frac12,
\eea
which contradicts the range of \eqref{mu1}. So maximally entangled states are strictly worse probe than single system probe in the given range.
\end{proof}
In contrast of last result, we can show that maximally entangled states can be useful as the probe with respect to single system if we restrict the noise parameters wisely.
\begin{thm}
     Maximally entangled probe is strictly better probe for distinguishability of two qubit amplitude damping channels with respect to single system probe when $(\sqrt{\mu_1}+\sqrt{\mu_2})^2 < 1/2$.
\end{thm}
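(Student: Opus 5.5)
The plan is to compare the two optimal trace norms directly, since both distinguishabilities have the form $\frac12+\frac14\|\cdot\|$. From the single-system analysis, in the regime $(\sqrt{\mu_1}+\sqrt{\mu_2})^2<1/2$ we have $\max_\theta\|\mathbf{L}\|=(\sqrt{\mu_1}-\sqrt{\mu_2})/\sqrt{1-(\sqrt{\mu_1}+\sqrt{\mu_2})^2}$. From \eqref{X} we have $\|\mathbf{X}\|=\frac12(\mu_1-\mu_2)+\frac12[(\mu_1-\mu_2)^2+4(\sqrt{\mu_1}-\sqrt{\mu_2})^2]^{1/2}$. (I would double-check that the prefactors in \eqref{mu2} and \eqref{mu_ment} are consistent with $\frac12+\frac14\|\cdot\|$, and if needed recompute the single-system maximum from $\|\mathbf{L}\|^2=4a^2(x-(1-s^2)x^2)$ with $x=\sin^2\frac\theta2$.) So it suffices to show $\|\mathbf{X}\|>\max_\theta\|\mathbf{L}\|$ throughout this regime.

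To simplify, set $a=\sqrt{\mu_1}-\sqrt{\mu_2}>0$ and $s=\sqrt{\mu_1}+\sqrt{\mu_2}$, so that $\mu_1-\mu_2=as$ and the condition becomes $0<s^2<1/2$. Dividing by $a$, the claim reduces to the one-variable inequality
\[
t(s):=\frac{s+\sqrt{s^2+4}}{2}\;>\;R(s):=\frac{1}{\sqrt{1-s^2}} .
\]

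The key observation is that $t$ is the positive root of $t-1/t=s$. The map $y\mapsto y-1/y$ is strictly increasing on $(0,\infty)$, and both $t$ and $R$ are positive. Hence $t>R$ is equivalent to
\[
s > R-\frac1R=\frac{1-(1-s^2)}{\sqrt{1-s^2}}=\frac{s^2}{\sqrt{1-s^2}} .
\]
Since $s>0$, this is the same as $\sqrt{1-s^2}>s$, i.e.\ $s^2<1/2$, which is exactly the hypothesis. This gives strict superiority of the maximally entangled probe. It also explains Theorem \ref{4}: equality holds precisely at the boundary $s^2=1/2$, where the two single-system regimes meet continuously.

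The main obstacle is not the final inequality but making sure the single-system optimum is correct. In particular, the interior maximizer $x=1/(2(1-s^2))$ lies in $[0,1]$ exactly when $s^2\le1/2$. One must also check that it beats the endpoint $x=1$ (which gives $2as$); this is just $1/\sqrt{1-s^2}\ge 2s$, i.e.\ $(2s^2-1)^2\ge0$. Once that is secured, the monotonicity trick above settles the comparison without squaring messy radicals.
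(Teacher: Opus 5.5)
Your proposal is correct and follows the same route as the paper. You compare $\|\mathbf{X}\|$ with the interior single-system optimum, rewrite $\mu_1-\mu_2=(\sqrt{\mu_1}+\sqrt{\mu_2})(\sqrt{\mu_1}-\sqrt{\mu_2})$, and reduce the comparison to $(\sqrt{\mu_1}+\sqrt{\mu_2})^2<1/2$. The one difference is that you use the monotonicity of $y\mapsto y-1/y$ in place of the paper's squaring of both sides. This makes every step an explicit equivalence, so your argument runs from the hypothesis to the conclusion, whereas the paper derives the hypothesis from the desired inequality and tacitly relies on that reversibility. Your checks of the prefactors and of the single-system maximizer also confirm the paper's formulas.
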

\begin{proof}
    From \eqref{mu2} and \eqref{mu_ment}, one can start with that
    \bea
&&\frac12(\mu_1-\mu_2)+\frac12\left[(\mu_1-\mu_2)^2+4(\sqrt{\mu_1}-\sqrt{\mu_2})^2\right]^{1/2}\nonumber\\
&&>\dfrac{ (\sqrt{\mu_1}-\sqrt{\mu_2}) }{\sqrt{ 1 - (\sqrt{\mu_1}+\sqrt{\mu_2})^2 } }.
    \eea
We can write, $(\mu_1-\mu_2)=(\sqrt{\mu_1}+\sqrt{\mu_2})(\sqrt{\mu_1}-\sqrt{\mu_2})$. Implementing this and squaring the both sides of the above inequality, we will get,
\be
(\sqrt{\mu_1}+\sqrt{\mu_2})^2 < 1/2,
\ee
which is consistent with \eqref{mu2}.That suffices the proof.
\end{proof}

We move to the next result where the optimality of non-maximally entangled state is established.
\begin{thm}
    Non-maximally entangled probe is the optimal probe for the discrimination of two qubit amplitude damping channels if $(\sqrt{\mu_1}+\sqrt{\mu_2})^2 \ge 1/2$.
\end{thm}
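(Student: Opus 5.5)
The plan is to show two things in the regime $(\sqrt{\mu_1}+\sqrt{\mu_2})^2\ge 1/2$. First, some partially entangled probe strictly beats the optimal single-system value $2(\mu_1-\mu_2)$ obtained above. Second, by Theorem \ref{4}, the maximally entangled probe is strictly worse than the single system there. Together these mean the optimum over entangled probes, which exists by compactness of the set of pure bipartite probes, is attained at neither a product nor a maximally entangled state. It is therefore attained at a non-maximally entangled state.

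\textbf{Probe family.} I would use the one-parameter family of pure probes
\[
\ket{\psi_g}=\sqrt{g}\ket{0}_A\ket{0}_B+\sqrt{1-g}\ket{1}_A\ket{1}_B ,\qquad g\in[0,1].
\]
This choice is guided by the single-system optimum $\theta=\pi$, i.e. the probe $\ket{1}$, which is the case $g=0$. At $g=1/2$ it is maximally entangled.

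\textbf{Output difference.} Applying $H_0\otimes\I_2$ and $H_1\otimes\I_2$ gives the output
\[
\ket{a_\mu}\bra{a_\mu}+(1-\mu)(1-g)\ket{01}\bra{01},\qquad \ket{a_\mu}=\sqrt{g}\ket{00}+\sqrt{\mu(1-g)}\ket{11}.
\]
The difference of the two outputs then splits into two orthogonal blocks:
\begin{itemize}[label={}]
\item the block on $\ket{01}$, equal to $-(\mu_1-\mu_2)(1-g)$;
\item a $2\times 2$ block on $\{\ket{00},\ket{11}\}$, with zero $(1,1)$ entry, off-diagonal entry $\sqrt{g(1-g)}(\sqrt{\mu_1}-\sqrt{\mu_2})$ and $(2,2)$ entry $(\mu_1-\mu_2)(1-g)$.
\end{itemize}
The $2\times2$ block has negative determinant, so its trace norm is twice the square root of (quarter trace squared minus determinant). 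This yields
\[
f(g)=(\mu_1-\mu_2)(1-g)+\sqrt{(\mu_1-\mu_2)^2(1-g)^2+4g(1-g)(\sqrt{\mu_1}-\sqrt{\mu_2})^2}.
\]
As consistency checks, $f(0)=2(\mu_1-\mu_2)$ reproduces \eqref{mu1}, and $f(1/2)$ reproduces \eqref{X}.

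\textbf{Local improvement near $g=0$.} I would differentiate $f$ at $g=0$:
\[
f'(0)=-2(\mu_1-\mu_2)+\frac{2(\sqrt{\mu_1}-\sqrt{\mu_2})^2}{\mu_1-\mu_2}
=\frac{2(\sqrt{\mu_1}-\sqrt{\mu_2})}{\sqrt{\mu_1}+\sqrt{\mu_2}}\Big[1-(\sqrt{\mu_1}+\sqrt{\mu_2})^2\Big].
\]
This is strictly positive whenever $(\sqrt{\mu_1}+\sqrt{\mu_2})^2<1$. In that case an infinitesimally entangled probe already beats every single-system probe. Combined with Theorem \ref{4} ($f(1/2)<f(0)$ in this regime), $\max_g f$ is attained at some $g^*\in(0,1/2)$.

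\textbf{Main obstacle.} The hard part is justifying that restricting to this Schmidt basis loses nothing, i.e. that the true optimum over all pure bipartite probes is also non-maximally entangled. For the statement as phrased, though, the argument above suffices:
\begin{itemize}[label={}]
\item every product probe achieves at most $2(\mu_1-\mu_2)=f(0)<f(g^*)$;
\item every maximally entangled probe achieves the same value $f(1/2)<f(0)$, by the preceding lemma.
\end{itemize}
Hence the global optimizer, whatever its Schmidt basis, must have Schmidt coefficient strictly between $0$ and $1/2$.

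A second subtlety is the boundary case $(\sqrt{\mu_1}+\sqrt{\mu_2})^2\ge 1$, where $f'(0)\le 0$. There I would check concavity of $f$ in $g$ (the square root of a concave quadratic plus a linear term is concave). Concavity would imply $g=0$ is optimal within the family. So I expect the theorem to require the additional restriction $(\sqrt{\mu_1}+\sqrt{\mu_2})^2<1$, which I would state explicitly, in line with the abstract's ``noise parameters restricted to certain values''.

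Finally, I would solve $f'(g)=0$, a quadratic after squaring, to give $g^*$ explicitly.
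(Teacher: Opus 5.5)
Your argument follows the paper's proof: the same probe family $\sqrt{p}\ket{00}+\sqrt{1-p}\ket{11}$, the same trace norm (your $f(g)$ is the paper's $\|\mathbf{N}\|$ with $g=p$), and the same appeal to Theorem~\ref{4} for maximally entangled probes. The one technical difference is that the paper squares $f(p)>2(\mu_1-\mu_2)$ instead of differentiating at $p=0$. Every step of that squaring is reversible, so for $p\in(0,1]$ it gives $f(p)>f(0)$ iff $(\sqrt{\mu_1}+\sqrt{\mu_2})^2<1-p$, and $f(p)<f(0)$ iff the reverse strict inequality holds. This identifies the whole improving range, which lies in $(0,1/2)$ here, without needing concavity.

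Your caveat is right, and it exposes a gap in the paper's proof rather than in yours. From $(\sqrt{\mu_1}+\sqrt{\mu_2})^2<1-p$ the paper passes to ``$1-p\ge 1/2$, i.e.\ $p\le 1/2$'', which is only a necessary condition. If $(\sqrt{\mu_1}+\sqrt{\mu_2})^2\ge 1$, no $p\in(0,1)$ qualifies. Your compactness remark also makes explicit a step the paper leaves implicit: from ``some non-maximally entangled probe beats all product and maximally entangled probes'' to ``every optimizer is non-maximally entangled''. One small correction: at $(\sqrt{\mu_1}+\sqrt{\mu_2})^2=1/2$ one has $f(1/2)=f(0)$, so Theorem~\ref{4} is not strict on that boundary. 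Your argument only needs $f(1/2)\le f(0)<f(g^*)$, so it is unaffected.

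Your ``main obstacle'' can be removed, which turns your expectation into a definite correction of the statement. Write a general pure probe as $\ket{0}\ket{u}+\ket{1}\ket{v}$, and set $t=\langle v|v\rangle$ and $|\beta|^2=1-|\langle u|v\rangle|^2/(\langle u|u\rangle\langle v|v\rangle)$. Let $a=\mu_1-\mu_2$ and $\kappa^2=(\sqrt{\mu_1}-\sqrt{\mu_2})^2t(1-t)$. The output difference is then traceless and supported on a three-dimensional subspace, with characteristic polynomial $\lambda^3-(\kappa^2+a^2t^2)\lambda-at\kappa^2|\beta|^2$. Its trace norm is twice the largest root. The largest root of $\lambda^3-P\lambda=Q$ increases with $Q=at\kappa^2|\beta|^2$, so for fixed $t$ the optimum is $u\perp v$. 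That is exactly your family with $g=1-t$, while $|\beta|=0$ recovers $\|\mathbf{L}\|$. Hence the optimum over all probes is $\max_g f(g)$. By the strict relation above, when $(\sqrt{\mu_1}+\sqrt{\mu_2})^2\ge 1$ every entangled probe is strictly worse than the single-system probe $\ket{1}$. So the theorem is false in that regime (e.g.\ $\mu_1=0.5$, $\mu_2=0.3$) and should be restricted to $1/2\le(\sqrt{\mu_1}+\sqrt{\mu_2})^2<1$. Moreover $f'(1/2)<0$ always, so the same reasoning shows the optimum is non-maximally entangled for every $(\sqrt{\mu_1}+\sqrt{\mu_2})^2<1$.
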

\begin{proof}
    In Theorem \ref{4}, it has been proved that single system is strictly better probe than maximally entangled probe if $(\sqrt{\mu_1}+\sqrt{\mu_2})^2 \ge 1/2$. Here, it would be sufficient if we show that non-maximally entangled is strictly better probe than single system probe in the given range.  We take a non-maximally entangled probe $\delta'=\ket{\phi^{''}}\bra{\phi^{''}}$, where $\ket{\phi^{''}}=\sqrt{p}\ket{00}+\sqrt{1-p}\ket{11}$, where $p\neq 0,1$. Using this probe, we calculate the following quantity,
\bea\label{N}
||\mathbf{N}||
&=&
||(\mathcal{H}_{\mu_1}\otimes \I_2)(\delta')
-
(\mathcal{H}_{\mu_2}\otimes\I_2)(\delta')||\nonumber\\
  &=&  || (\mu_1-\mu_2)(1-p)\ket{11}\bra{11}\nonumber\\
  &&-(\mu_1-\mu_2)(1-p)\ket{01}\bra{01}\nonumber\\
  &&+\sqrt{p(1-p)}(\sqrt{\mu_1}-\sqrt{\mu_2})(\ket{00}\bra{11}+\ket{11}\bra{00})\nonumber\\
  &=& (1-p)(\mu_1-\mu_2)\nonumber\\
  &&+(1-p)\left[(\mu_1-\mu_2)^2+4\frac{p}{1-p}(\sqrt{\mu_1}-\sqrt{\mu_2})^2\right]^{1/2}.\nonumber\\
  \eea 
  We have to show,
  \[
  ||\mathbf{N}||>||\mathbf{L}||
  \]
  which necessitates,
    \bea
&&(1-p)\Bigg\{(\mu_1-\mu_2)\nonumber\\
&&+\left[(\mu_1-\mu_2)^2+4\frac{p}{1-p}(\sqrt{\mu_1}-\sqrt{\mu_2})^2\right]^{1/2}\Bigg\}\nonumber\\
&& > 2(\mu_1-\mu_2)
    \eea
Simplifying this, we get 
\bea
(\sqrt{\mu_1}+\sqrt{\mu_2})^2< 1-p;
\eea
In the necessary range, the above expression can be reduced as,
\bea
1-p\ge\frac12,
\eea
which gives, $p\le \frac12$ ($p=1/2$ corresponds to maximally entangled state). If the values of $p$ satisfies $p<1/2$, non-maximally entangled state serves as the best probe in the given range of noise parameters.
\end{proof}
In the previous proof, we do not comment about the hierarchy between the non-maximally entangled states. We just provide the existence of a non-maximally entangled probe which is more successful in that task with respect to both single system and maximally entangled system.

Due to the complexity of calculation, we could not find the optimal probe for the case $(\sqrt{\mu_1}+\sqrt{\mu_2})^2 \ge 1/2$. 
\subsection{Noisy unitary channels}
The noisy unitary channel $\mathcal{U}$ is defined as an ensemble of unitaries $\{U_k,q_k\}_k$, where each unitary $U_k$ is sampled with probability $q_k$. This channel acts on a density operator $\rho$ and gives the output as,
\bea
\mathcal{U}(\rho)=\sum_k q_k U_k\rho U_k^\dagger.
\eea
Firstly, we consider the distinguishability of two noisy unitaries $\mathcal{V}$ and $\mathcal{W}$, which are defined as $\mathcal{V}=\{V_k,q_k\}$ and $\mathcal{W}=\{W_k,q_k\}$. For sake of simplicity, we take the sampling probability to be same.

    With single system probe $\delta$, the distinguishability depends on the quantity $||\mathcal{V}(\delta)-\mathcal{W}(\delta)||$. 
\bea\label{mix_u}
||\mathcal{V}(\delta)-\mathcal{W}(\delta)|| &=& ||\sum_k q_k V_k\rho V_k^\dagger- \sum_k q_k W_k\rho W_k^\dagger ||\nonumber\\
&\leqslant & \sum_k || q_k \left(V_k\rho V_k^\dagger-  W_k\rho W_k^\dagger\right) ||\nonumber\\
&=& \sum_k q_k ||V_k\rho V_k^\dagger-  W_k\rho W_k^\dagger||.
\eea
If we take $\delta$ as a pure state,i.e., $\delta=\ket{d}\bra{d}$, Hellstrom formula reduces \eqref{mix_u} to,
\bea\label{nor_bound}
||\mathcal{V}(\delta)-\mathcal{W}(\delta)|| &\leqslant & 2 \sum_k q_k \sqrt{1-|\langle d|V_k^\dagger W_k|d\rangle|^2}\nonumber\\
\eea

If we start with maximally entangled state $\ket{\phi^+}=\frac{1}{\sqrt{d}}\sum_i\ket{ii}$, similarly the above bound takes form as following:

\bea\label{ent_bound}
&&||(\mathcal{V}\otimes\I_d)(\ket{\phi^+}\bra{\phi^+})-(\mathcal{W}\otimes\I_d)(\ket{\phi^+}\bra{\phi^+})|| \nonumber\\&\leqslant & 2 \sum_k q_k \sqrt{1-|\langle \phi^+|(V_k^\dagger\otimes\I_d) (W_k\otimes\I_d)|\phi^+\rangle|^2}\nonumber\\
&=&  2 \sum_k q_k \sqrt{1-\frac{1}{d^2}|\sum_{i=1}^d\bra{i}V_k^\dagger W_k\ket{i}|^2}\nonumber\\
&=& 2 \sum_k q_k \sqrt{1-\frac{1}{d^2}|\tr(V_k^\dagger W_k)|^2}.
\eea
As trace is a basis-independent property, the last line comes trivially after the third line. For any of the pairs $\{V_k, W_k\}$, if $\tr(V_k^\dagger W_k)\neq 0$, the mixed unitaries are not distinguishable with maximally entangled probe.

 If $q_k=1$, i.e., the unitaries are not mixed, the single system probe is sufficient \cite{bsxv-q9x7}. But this is not the case for mixed unitaries. Let us take the two noisy unitaries $\mathcal{L}=\{L_k,q_k\}_{k=1}^3$ and $\mathcal{S}=\{S_k,q_k\}_{k=1}^3$ in dimension $3$ and $q_k \neq 0,1, \forall k$. The components are defined as follows:
\begin{align}
L_1 &= \ket{0}\bra{0} + \ket{1}\bra{1} + \ket{2}\bra{2}, \nonumber\\
L_2 &= \ket{1}\bra{0} + \ket{2}\bra{1} + \ket{0}\bra{2}, \nonumber\\
L_3 &= -\ket{0}\bra{0} + \ket{1}\bra{1} + \ket{2}\bra{2}, \nonumber\\
S_1 &= \ket{2}\bra{0} + \ket{0}\bra{1} + \ket{1}\bra{2}, \nonumber\\
S_2 &= -\ket{1}\bra{0} + \ket{2}\bra{1} + \ket{0}\bra{2}, \nonumber\\
S_3 &= -\ket{2}\bra{0} + \ket{0}\bra{1} + \ket{1}\bra{2}.
\end{align}


\begin{thm}
    Non-maximally entangled state is the optimal probe for the discrimination of two noisy unitary channels $\mathcal{L}$ and $\mathcal{S}$.
\end{thm}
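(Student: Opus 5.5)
The strategy is to prove three things. First, the non-maximally entangled probe suggested by the structure of $\mathcal{L}$ and $\mathcal{S}$ discriminates the two channels perfectly, so $\mathcal{D}_E=1$. Second, no single-system probe can do this. Third, no maximally entangled probe can do this. Since perfect discrimination is the best possible value, these together establish the statement.

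The tool throughout is a perfect-discrimination criterion. By convexity we may take a pure probe $\ket{\psi}_{AB}$ with reduced state $\rho_A$. The outputs $\sum_k q_k (L_k\otimes\I)\ket{\psi}\bra{\psi}(L_k^\dagger\otimes\I)$ and $\sum_k q_k (S_k\otimes\I)\ket{\psi}\bra{\psi}(S_k^\dagger\otimes\I)$ have orthogonal supports if and only if
\be
\bra{\psi}(L_k^\dagger S_j\otimes\I)\ket{\psi}=\tr(\rho_A L_k^\dagger S_j)=0\quad \forall\, k,j ,
\ee
since all $q_k\neq0$. Orthogonal supports are exactly what is needed for trace norm $2$. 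For a single system, $\rho_A=\ket{d}\bra{d}$ is rank one. For the maximally entangled probe, $\rho_A=\I_3/3$, which is the trace condition already derived in \eqref{ent_bound}.

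I will first tabulate the nine operators $L_k^\dagger S_j$. Directly, $L_2^\dagger S_2=\mathrm{diag}(-1,1,1)$, which has trace $1\neq 0$. Hence the maximally entangled probe fails by \eqref{ent_bound}, and likewise for every maximally entangled state, since all of them have $\rho_A=\I_3/3$. The remaining eight products should all be signed permutation matrices with zero diagonal: $L_1^\dagger S_j=S_j$ are cyclic, $L_3$ only flips a sign, and $L_2^\dagger S_1$, $L_2^\dagger S_3$ are cyclic shifts up to signs. Consequently, for any $\rho_A$ diagonal in the computational basis, the only nontrivial constraint is $\tr(\rho_A\,\mathrm{diag}(-1,1,1))=0$, that is, $(\rho_A)_{00}=1/2$. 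So I choose
\be
\ket{\psi}=\tfrac{1}{\sqrt2}\ket{00}+\sqrt{p_1}\ket{11}+\sqrt{p_2}\ket{22},\qquad p_1+p_2=\tfrac12 .
\ee
This state is entangled (for instance $p_1=1/2$, $p_2=0$ gives a two-qubit Bell-type state embedded in $\mathbb{C}^3$, or take $p_1,p_2>0$) but not maximally entangled. By the criterion above it yields perfect discrimination.

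For single systems I will show that no $\ket{d}=\sum_i d_i\ket{i}$ satisfies all nine conditions. The diagonal condition forces $|d_0|^2=1/2$, so $|d_1|^2+|d_2|^2=1/2$. Each off-diagonal signed permutation $P$ gives a condition $\bra{d}P\ket{d}=0$, which is a signed sum of the bilinears $d_0^*d_1$, $d_1^*d_2$, $d_2^*d_0$ or of their conjugates. The plan is to pair operators that differ only by a sign on one entry, namely $S_1$ versus $S_3$, and $S_2$ composed with $L_1$ versus with $L_2$ or $L_3$. Subtracting such pairs should isolate individual bilinears and force $d_0 d_1^*=0$ and $d_0 d_2^*=0$. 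Combined with $d_0\neq0$, this gives $d_1=d_2=0$, contradicting $|d_1|^2+|d_2|^2=1/2$. Mixed single-system states are covered by convexity.

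The main obstacle is this last step: correctly computing the nine products and checking that enough independent sign variants exist to decouple the bilinears. If the pairing does not fully isolate them, I will instead write the real and imaginary parts of the conditions as a linear system in $x=d_0^*d_1$, $y=d_1^*d_2$, $z=d_2^*d_0$ and their conjugates, and show that it has only the trivial solution, or that it forces $xyz$-type relations incompatible with $|d_0|^2=1/2$.
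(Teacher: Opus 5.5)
Your proposal is correct and is essentially the paper's argument: the same probe family $\frac{1}{\sqrt2}\ket{00}+c_1\ket{11}+c_2\ket{22}$, the same obstruction $\tr(L_2^\dagger S_2)=1\neq0$ for maximally entangled probes, and the same kind of contradiction for single systems once $|d_0|^2=1/2$. The one difference is that you use the exact orthogonal-support criterion $\tr(\rho_A L_k^\dagger S_j)=0$ over all nine pairs, whereas the paper uses the triangle-inequality bound over the three pairs $L_k^\dagger S_k$; your version also covers every maximally entangled state at once via $\rho_A=\I_3/3$.

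For the step you flagged: since $S_3=S_1L_3$, compare $L_k^\dagger S_1$ with $L_k^\dagger S_3$ for $k=1$ (where $L_1^\dagger S_1=S_1$) and $k=2$ (where $L_2^\dagger S_1=L_2$). These pairs isolate $d_2^*d_0$ and $d_1^*d_0$ respectively, which forces $d_1=d_2=0$ because $d_0\neq0$. The $S_2$-pairings are therefore unnecessary, and $L_2^\dagger S_2=L_3$ is diagonal rather than a sign variant anyway.
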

\begin{proof}
   First, we consider a non-maximally entangled probe
$Z_{AB}=\ket{\zeta}\bra{\zeta}$, where
\[
\ket{\zeta}=\frac{1}{\sqrt{2}}\ket{00}+c_1\ket{11}+c_2\ket{22}.
\]
Normalization implies that $|c_1|^2+|c_2|^2=1/2$. Consequently,
\bea\label{non_max}
&&||(\mathcal{L}\otimes\I_3)(Z)
-(\mathcal{S}\otimes\I_3)(Z)||\nonumber\\
&=& \Big\|\sum_k q_k (L_k\otimes\I_3)Z(L_k^\dagger\otimes\I_3)
-\sum_k q_k (S_k\otimes\I_3)Z(S_k^\dagger\otimes\I_3)\Big\|\nonumber\\
&=& \Big\| q_1 (\ket{\zeta_1}\bra{\zeta_1}-\ket{\zeta_4}\bra{\zeta_4})
+q_2 (\ket{\zeta_2}\bra{\zeta_2}-\ket{\zeta_5}\bra{\zeta_5})\nonumber\\
&&\qquad
+q_3 (\ket{\zeta_3}\bra{\zeta_3}-\ket{\zeta_6}\bra{\zeta_6})\Big\|\nonumber\\
&=&||\mathbf{Y}||,
\eea
where
\[
\begin{aligned}
&\ket{\zeta_1}=\frac{1}{\sqrt{2}}\ket{00}+c_1\ket{11}+c_2\ket{22},\\
&\ket{\zeta_2}=\frac{1}{\sqrt{2}}\ket{10}+c_1\ket{21}+c_2\ket{02},\\
&\ket{\zeta_3}=-\frac{1}{\sqrt{2}}\ket{00}+c_1\ket{11}+c_2\ket{22},\\
&\ket{\zeta_4}=\frac{1}{\sqrt{2}}\ket{20}+c_1\ket{01}+c_2\ket{12},\\
&\ket{\zeta_5}=-\frac{1}{\sqrt{2}}\ket{10}+c_1\ket{21}+c_2\ket{02},\\
&\ket{\zeta_6}=-\frac{1}{\sqrt{2}}\ket{20}+c_1\ket{01}+c_2\ket{12}.
\end{aligned}
\]
One can verify that the set of states $\{\ket{\zeta_k}\}_k$ is an
orthogonal set. So these states are the eigenvectors of $||\mathbf{Y}||$. From~\eqref{non_max}, we obtain
\bea
&&||(\mathcal{L}\otimes\I_3)\mathbf{Z}(\mathcal{L}^\dagger\otimes\I_3)
-(\mathcal{S}\otimes\I_3)\mathbf{Z}(\mathcal{S}^\dagger\otimes\I_3)||\nonumber\\
&=& |q_1|+|-q_1|+|q_2|+|-q_2|+|q_3|+|-q_3|\nonumber\\
&=& 2.
\eea
This leads to
\bea
&&\mathcal{D}_E\!\left[\{\mathcal{L},\mathcal{S}\},\{1/2,1/2\}\right]\nonumber\\
&=& \frac12+\frac14
||(\mathcal{L}\otimes\I_3)\mathbf{Z}-(\mathcal{S}\otimes\I_3)\mathbf{Z}||\nonumber\\
&=& 1.
\eea

If we are restricted to a single-system probe, the distinguishability is
strictly less than $1$. To show this, we invoke the bound in~\eqref{nor_bound},
which in the present case reduces to
\bea
||\mathcal{L}(\delta)-\mathcal{S}(\delta)|| 
&\leqslant&
2 \sum_k q_k \sqrt{1-|\langle d|L_k^\dagger S_k|d\rangle|^2}.
\eea
Since the probabilities $q_k$ are fixed, the right-hand side would attain the
value $2$ only if all the terms
$\{\sqrt{1-|\langle d|L_k^\dagger S_k|d\rangle|^2}\}_k$
are equal to $1$. This requires a common probe state $\ket{d}$ that perfectly
distinguishes all the pairs $\{L_k,S_k\}_k$.

First, consider the pair $\{L_2,S_2\}$. Let
$\ket{d}=\sum_{i=0}^2 d_i\ket{i}$. We compute
\bea
|\langle d|L_2^\dagger S_2|d\rangle|
&=& \left||d_0|^2(-1)+|d_1|^2(1)+|d_2|^2(1)\right|.
\eea
This expression vanishes if $|d_0|^2=1/2$ and
$|d_1|^2+|d_2|^2=1/2$. Hence the probe can be written as
\[
\ket{d}=\frac{1}{\sqrt{2}}\ket{0}+d_1\ket{1}+d_2\ket{2}.
\]
With this choice, perfect distinguishability of $L_1$ and $S_1$ yields
\bea\label{cond1}
\frac{d_2}{\sqrt{2}}+\frac{d_1^*}{\sqrt{2}}+d_2^*d_1=0.
\eea
Similarly, distinguishability of $L_3$ and $S_3$ gives
\bea\label{cond2}
-\frac{d_2}{\sqrt{2}}-\frac{d_1^*}{\sqrt{2}}+d_2^*d_1=0.
\eea
Adding~\eqref{cond1} and~\eqref{cond2}, we obtain
\be
d_2^*d_1=0.
\ee
This implies $d_2=0$ or $d_1=0$. Substituting either choice into
\eqref{cond1} or \eqref{cond2} forces $d_2=d_1=0$, contradicting normalization.
Therefore, no common probe exists that can distinguish all three pairs, and
hence
\bea
||\mathcal{L}(\delta)-\mathcal{S}(\delta)|| < 2.
\eea

One can check that $\tr(L_2^\dagger S_2)=1\neq 0$. From\eqref{ent_bound}, we can infer that,
\bea
||(\mathcal{L}\otimes\I_3)(\ket{\phi^+}\bra{\phi^+})-(\mathcal{S}\otimes\I_3)(\ket{\phi^+}\bra{\phi^+})|| < 2.
\eea
That completes our proof.
\end{proof}
Now we will show that there exists two noisy unitary channels, where the maximally entangled state acts as the strictly worse probe than the single system probe. To this note, let us consider two noisy unitaries $\mathcal{\overline{L}}=\{\overline{L}_k,q_k\}_{k=1}^3$ and $\mathcal{\overline{S}}=\{\overline{S}_k,q_k\}_{k=1}^3$ in dimension $6$ and $q_k \neq 0,1, \forall k$. The components are defined as follows:
\begin{align}
\overline{L}_1 &= \ket{0}\bra{0}+\ket{1}\bra{1}+\ket{2}\bra{2}+\ket{3}\bra{3}+\ket{4}\bra{4}+\ket{5}\bra{5},\nonumber\\
\overline{L}_2 &= \ket{1}\bra{0}+\ket{0}\bra{1}+\ket{2}\bra{2}+\ket{3}\bra{3}+\ket{4}\bra{4}+\ket{5}\bra{5}, \nonumber\\
\overline{L}_3 &= \ket{2}\bra{0}+\ket{1}\bra{1}+\ket{0}\bra{2}+\ket{3}\bra{3}+\ket{4}\bra{4}+\ket{5}\bra{5}, \nonumber\\
\overline{S}_1 &= \ket{3}\bra{0}+\ket{1}\bra{1}+\ket{2}\bra{2}+\ket{0}\bra{3}+\ket{4}\bra{4}+\ket{5}\bra{5}, \nonumber\\
\overline{S}_2 &= \ket{4}\bra{0}+\ket{1}\bra{1}+\ket{2}\bra{2}+\ket{3}\bra{3}+\ket{0}\bra{4}+\ket{5}\bra{5}, \nonumber\\
\overline{S}_3 &= \ket{5}\bra{0}+\ket{1}\bra{1}+\ket{2}\bra{2}+\ket{3}\bra{3}+\ket{4}\bra{4}+\ket{0}\bra{5}.
\end{align}
\begin{thm}
    Single system is the better probe for the discrimination of two noisy unitary channels $\mathcal{\overline{L}}$ and $\mathcal{\overline{S}}$ than maximally entangled system.
\end{thm}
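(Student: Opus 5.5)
The plan is to exhibit an explicit single-system probe that discriminates $\mathcal{\overline{L}}$ and $\mathcal{\overline{S}}$ perfectly, and then to use the bound \eqref{ent_bound} to show that the maximally entangled probe $\ket{\phi^+}=\frac{1}{\sqrt6}\sum_{i=0}^{5}\ket{ii}$ cannot reach distinguishability $1$. Together these give a strict separation in favour of the single system.

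\emph{Single-system probe.} Take $\delta=\ket{0}\bra{0}$. Every $\overline{L}_k$ and $\overline{S}_k$ is a transposition (or the identity) on the basis $\{\ket{0},\dots,\ket{5}\}$. Reading off the definitions,
\[
\overline{L}_1\ket{0}=\ket{0},\quad \overline{L}_2\ket{0}=\ket{1},\quad \overline{L}_3\ket{0}=\ket{2},
\]
\[
\overline{S}_1\ket{0}=\ket{3},\quad \overline{S}_2\ket{0}=\ket{4},\quad \overline{S}_3\ket{0}=\ket{5}.
\]
Hence
\[
\mathcal{\overline{L}}(\delta)=\sum_{k=1}^{3} q_k\ket{k-1}\bra{k-1}
\quad\text{and}\quad
\mathcal{\overline{S}}(\delta)=\sum_{k=1}^{3} q_k\ket{k+2}\bra{k+2}.
\]
These two states are supported on the orthogonal subspaces $\mathrm{span}\{\ket{0},\ket{1},\ket{2}\}$ and $\mathrm{span}\{\ket{3},\ket{4},\ket{5}\}$. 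Therefore $||\mathcal{\overline{L}}(\delta)-\mathcal{\overline{S}}(\delta)||=2$, and by \eqref{D_S1} the single-system distinguishability equals $1$.

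\emph{Maximally entangled probe.} Apply \eqref{ent_bound} with $d=6$. It suffices to find one index $k$ with $\tr(\overline{L}_k^\dagger\overline{S}_k)\neq0$. For $k=1$ we have $\overline{L}_1=\I_6$, so $\tr(\overline{L}_1^\dagger\overline{S}_1)=\tr(\overline{S}_1)$. The matrix $\overline{S}_1$ fixes $\ket{1},\ket{2},\ket{4},\ket{5}$, so this trace is $4$. The $k=1$ term in \eqref{ent_bound} is then
\[
\sqrt{1-\tfrac{16}{36}}=\sqrt{5}/3<1.
\]
Every other term is at most $1$, and $q_1>0$, so the right-hand side of \eqref{ent_bound} is strictly below $2\sum_k q_k=2$. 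Hence $\mathcal{D}_{ME}<1$, which proves the theorem.

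\emph{Main obstacle.} The only delicate point is to make sure that the inequality chain behind \eqref{ent_bound} (the triangle inequality followed by the pure-state Helstrom formula for each pair) really yields a strict inequality. This holds because a single term with positive weight is strictly less than its maximum, so the strictness does not rely on the triangle-inequality step. The computation of the remaining traces, for instance $\tr(\overline{L}_2^\dagger\overline{S}_2)$, is not needed.
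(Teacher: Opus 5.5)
Your proof is correct and follows the paper's own argument. The probe $\ket{0}\bra{0}$ sends the two channels to states with orthogonal supports, and the bound \eqref{ent_bound} with a nonvanishing $\tr(\overline{L}_k^\dagger\overline{S}_k)$ rules out perfect discrimination with $\ket{\phi^+}$. You are also slightly more precise than the paper: you note that a single index already suffices ($\tr(\overline{S}_1)=4$ together with $q_1>0$), and you make the strictness of the inequality explicit, whereas the paper simply asserts that all three traces are nonzero.
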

\begin{proof}
    It is easy to check that $\tr(\overline{L}_i^\dagger\overline{S}_i)\neq 0$ for all $i=1,2,3$. From \eqref{ent_bound}, we deduce that maximally entangled probe can not discriminate two channels $\mathcal{\overline{L}}$ and $\mathcal{\overline{S}}$.   

    As a single system probe, we choose $\ket{0}\bra{0}$. Consequently,
    \bea\label{singles}
&&||\mathcal{\overline{L}}(\ket{0}\bra{0})
-\mathcal{\overline{S}}(\ket{0}\bra{0})||\nonumber\\
&=& \Big\|\sum_k q_k \overline{L}_k\ket{0}\bra{0}\overline{L}_k^\dagger -\sum_k q_k \overline{S}_k\ket{0}\bra{0}\overline{S}_k^\dagger\Big\|\nonumber\\
&=& \Big\| q_1 (\ket{0}\bra{0}-\ket{3}\bra{3})
+q_2 (\ket{1}\bra{1}-\ket{4}\bra{4})\nonumber\\
&&\qquad
+q_3 (\ket{2}\bra{2}-\ket{5}\bra{5})\Big\|\nonumber\\
&=& |q_1|+|-q_1|+|q_2|+|-q_2|+|q_3|+|-q_3|\nonumber\\
&=& 2
\eea
So the unitaries are perfectly distinguished using single system.
\end{proof}
For mixed unitaries with more intricate structures, one may allow unequal sampling probabilities. Such scenarios, however, lie beyond the scope of the present work.
\subsection{Erasure Channels}
A quantum erasure channel either transmits the input state with some probability $\epsilon$ or replaces it with some state $\ket{e}$ with some probability $(1-\epsilon)$, where $\ket{e}$ is not in the input Hilbert space, which means $\ket{e}$ is orthogonal to the input state. It implements the action as follows:
\be
\mathcal{E}_{\epsilon}(\rho)= \epsilon\rho + (1-\epsilon)\ket{e}\bra{e}.
\ee
\begin{lemma}
    All pure single system probes are equivalent for distinguishing two eraser channels.
\end{lemma}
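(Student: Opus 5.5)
Following the template of Lemma \ref{th1}, I will use \eqref{D_S1}: the quantity to compute is $||\mathcal{E}_{\epsilon_1}(\rho)-\mathcal{E}_{\epsilon_2}(\rho)||$ for two erasure channels with parameters $\epsilon_1,\epsilon_2$. Substituting the definition gives
\[
\mathcal{E}_{\epsilon_1}(\rho)-\mathcal{E}_{\epsilon_2}(\rho)=(\epsilon_1-\epsilon_2)\left(\rho-\ket{e}\bra{e}\right).
\]
Hence
\[
||\mathcal{E}_{\epsilon_1}(\rho)-\mathcal{E}_{\epsilon_2}(\rho)||=|\epsilon_1-\epsilon_2|\,||\rho-\ket{e}\bra{e}||.
\]
This reduces the task to computing a single trace norm, as in \eqref{phi}.

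The key structural fact is that the erasure flag $\ket{e}$ lies outside the input Hilbert space. The output space is therefore $\mathcal{H}_A\oplus\mathrm{span}\{\ket{e}\}$, and $\rho$ (supported on $\mathcal{H}_A$) and $\ket{e}\bra{e}$ have orthogonal supports. The operator $\rho-\ket{e}\bra{e}$ is then block diagonal. Its eigenvalues are those of $\rho$ together with the single eigenvalue $-1$ on the flag. So
\[
||\rho-\ket{e}\bra{e}||=\tr\rho+1=2
\]
for every input state $\rho$, pure or mixed. For a pure probe $\rho=\ket{a}\bra{a}$ this is even more explicit: $\ket{a}\bra{a}-\ket{e}\bra{e}$ has rank two, with eigenvectors $\ket{a},\ket{e}$ and eigenvalues $\pm1$, exactly as in the rank-two argument used for the dephasing channels with $\langle a|e\rangle=0$.

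Plugging this into \eqref{D_S1} gives
\[
\mathcal{D}_S\left[\{\mathcal{E}_{\epsilon_1},\mathcal{E}_{\epsilon_2}\},\{1/2,1/2\}\right]=\frac12\left(1+|\epsilon_1-\epsilon_2|\right).
\]
This value is independent of the probe, which proves that all pure single-system probes are equivalent.

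The only step needing care is justifying orthogonality of the supports: one must read ``$\ket{e}$ is not in the input Hilbert space'' as $\ket{e}\perp\mathcal{H}_A$ in the enlarged output space, not merely $\ket{e}\perp\ket{a}$ for the particular probe. Once that is fixed, the computation is immediate. The same block-diagonal argument applied to $(\mathcal{E}_\epsilon\otimes\I)\rho_{AB}$ also gives trace norm $2|\epsilon_1-\epsilon_2|$, which foreshadows the claim that entanglement offers no advantage.
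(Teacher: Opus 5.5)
Your proof is correct and takes essentially the same route as the paper. Both factor out $|\epsilon_1-\epsilon_2|$ and use the orthogonality of $\ket{e}$ to the support of $\rho$ to get $||\rho-\ket{e}\bra{e}|| = \tr\rho+1 = 2$ for every probe, hence $\frac12(1+|\epsilon_1-\epsilon_2|)$. Your reading of $\ket{e}\perp\mathcal{H}_A$, rather than orthogonality to the particular probe only, simply makes precise what the paper states loosely, and it shows the result also holds for mixed probes.
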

\begin{proof}
    With a single system probe $\rho$, the necessary trace norm term will be,
    \bea
||\mathcal{E}_{\epsilon_1}(\rho)-\mathcal{E}_{\epsilon_2}(\rho)||&=& |\epsilon_1-\epsilon_2|\hspace{0.02 cm}||\rho-\ket{e}\bra{e}|| \nonumber\\
&=&||\mathbf{W}||.
    \eea
As $\ket{e}\bra{e}$ is orthogonal to $\rho$, $\ket{e}$ is an eigenvector of $\mathbf{W}$. $\rho$ has all non-zero eigenvalues sum up to $1$. So, trace norm of the quantity is $2$.
\be\label{w}
||\mathbf{W}||= 2|\epsilon_1-\epsilon_2|
\ee
The distinguishability in this case,
\bea
&&\mathcal{D}_S\left[\{\mathcal{E}_{\epsilon_1},\mathcal{E}_{\epsilon_2}\},\{1/2,1/2\}\right]\nonumber\\
&=&\frac12+\frac12 |\epsilon_1-\epsilon_2|.
\eea
It is needless to say that this value does not depend on $\rho$. 
\end{proof}
\begin{thm}
    Single system probe is sufficient for distinguishing two erasure channels.
\end{thm}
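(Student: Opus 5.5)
The plan is to show that no bipartite probe $\rho_{AB}$ can push the trace norm above the single-system value $2|\epsilon_1-\epsilon_2|$ from \eqref{w}. By the preceding lemma, every pure single system already attains this value. Sufficiency of single systems then follows at once.

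First I would write the action of the extended channel. For any input $\rho_{AB}$ on $\mathcal{H}_A\otimes\mathcal{H}_B$ we have
\[
(\mathcal{E}_{\epsilon}\otimes\I)(\rho_{AB})=\epsilon\rho_{AB}+(1-\epsilon)\ket{e}\bra{e}\otimes\rho_B,
\]
where $\rho_B=\tr_A\rho_{AB}$. The difference of the two outputs is therefore
\[
(\epsilon_1-\epsilon_2)\left(\rho_{AB}-\ket{e}\bra{e}\otimes\rho_B\right).
\]

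The key observation is that $\ket{e}$ is orthogonal to the input space of $A$. Hence $\rho_{AB}$ is supported on $\mathcal{H}_A\otimes\mathcal{H}_B$, while $\ket{e}\bra{e}\otimes\rho_B$ is supported on $\mathrm{span}\{\ket{e}\}\otimes\mathcal{H}_B$. These two subspaces are orthogonal, so the operator is block diagonal and its trace norm is the sum of the two blocks' norms. This gives
\[
\|\rho_{AB}-\ket{e}\bra{e}\otimes\rho_B\|=\|\rho_{AB}\|+\|\rho_B\|=2.
\]
Thus $\mathcal{D}_E=\frac12+\frac12|\epsilon_1-\epsilon_2|$ for every probe, entangled or not. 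This coincides with the single-system value, so entanglement yields no advantage.

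Alternatively, I could bound the trace norm above by the triangle inequality, $\|X-Y\|\le\|X\|+\|Y\|=2$, and then use the lemma to show the bound is attained. That route gives optimality directly without needing equality.

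The only delicate point is justifying the orthogonality of the supports, which is the defining property of the erasure flag. Everything else is routine.
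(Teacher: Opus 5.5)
Your proof is correct and follows essentially the same argument as the paper. Like the paper, you write the extended output as $\epsilon\rho_{AB}+(1-\epsilon)\ket{e}\bra{e}\otimes\rho_B$, use the orthogonality of the erasure flag to the input space to split the difference into orthogonal blocks, and obtain $2|\epsilon_1-\epsilon_2|$, which matches the single-system value in \eqref{w}. Your version is slightly more complete than the paper's, which treats only a pure Schmidt-form probe $\ket{\zeta^+}$, because your block-diagonal (or triangle-inequality) argument handles arbitrary mixed $\rho_{AB}$ directly.
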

\begin{proof}
We start with the entangled probe $\ket{\zeta^+}=\sum_i\Omega_i\ket{\kappa_i}\ket{i}$. An erasure channel acts on the first bit of the probe resulting to the transformed state $\epsilon\ket{\zeta^+}\bra{\zeta^+}+(1-\epsilon)|E\rangle\langle E|\otimes\tr_A \ket{\zeta^+}\bra{\zeta^+}$. Similarly,
\bea
&&||\mathcal{E}_{\epsilon_1}(\ket{\zeta^+}\bra{\zeta^+})-\mathcal{E}_{\epsilon_2}(\ket{\zeta^+}\bra{\zeta^+})||\nonumber\\
&=& |\epsilon_1-\epsilon_2|\hspace{0.02 cm}||\ket{\zeta^+}\bra{\zeta^+}-|E\rangle\langle E|\otimes\tr_A \ket{\zeta^+}\bra{\zeta^+}|| \nonumber\\
&=&||\mathbf{W'}||.
    \eea
From the definition, we know that $\langle \kappa_i|E\rangle=0$ for all $i$. So $|E\rangle\langle E|\otimes\tr_A \ket{\zeta^+}\bra{\zeta^+}$ is an eigenvector of $\mathbf{W'}$. In the same logic of previous lemma, we can say that ,
\be\label{w'}
||\mathbf{W'}||= 2|\epsilon_1-\epsilon_2|
\ee
The distinguishability in this case,
\bea
&&\mathcal{D}_E\left[\{\mathcal{E}_{\epsilon_1},\mathcal{E}_{\epsilon_2}\},\{1/2,1/2\}\right]\nonumber\\
&=&\frac12+\frac12 |\epsilon_1-\epsilon_2|,
\eea
which is same as $\mathcal{D}_S\left[\{\mathcal{E}_{\epsilon_1},\mathcal{E}_{\epsilon_2}\}\right]$. Therefore we can conclude the sufficiency of single system probe in this distinguishing task. 
\end{proof}
A key insight from our results is that the distinguishability of two erasure channels is independent of the choice of probing state.
\section{Conclusion}
 In this paper, we identify optimal probing states for the discrimination of two noisy quantum channels and investigate the necessity of entanglement-assisted ancillary systems in this task. We prove that a maximally entangled state serves as the optimal probe for distinguishing two qubit depolarizing channels. We further show that entanglement is not required for the discrimination of two dephasing channels or two erasure channels. Additionally, we demonstrate that, depending on the noise parameters, a single-system probe and non-maximally entangled probe can outperform a maximally entangled probe in the discrimination of two qubit amplitude-damping channels. 
For the discrimination of two noisy unitary channels, we present an explicit example in which a non-maximally entangled state outperforms both single-system and maximally entangled probes. Furthermore, we identify a class of mixed-unitary channels for which a single-system probe provides an advantage over a maximally entangled probe in the channel discrimination task.

Our investigation raises several intriguing open questions. Most notably, identifying optimal probes for the discrimination of more than two quantum channels—and consequently deriving tight upper bounds on distinguishability for a general set of channels—remains an open problem. In several scenarios considered in this work, including qudit depolarizing channels, amplitude-damping channels for some classes, and general mixed-unitary channels, we are unable to fully characterize the optimal initial probe state, highlighting the need for further study.
Another interesting direction concerns the discrimination between different types of quantum channels. For instance, one may consider distinguishing a depolarizing channel from a dephasing channel. When the noise parameters of the two channels are identical and the system dimension is restricted to two, a maximally entangled state can be shown to be the optimal probe for this task. Since the proof closely follows the arguments used in the discrimination of two qubit depolarizing channels, we omit it from the present work.
Finally, channel distinguishability naturally arises in a wide range of communication complexity problems and quantum computation theory, indicating numerous potential applications of our framework. We hope that this work provides a broader perspective on quantum channel discrimination and stimulates further research in this direction.
\subsection*{Acknowledgement}
The author thanks Debashis Saha for fruitful discussion.
\bibliography{bib}
\end{document}